\pdfoutput=1

\documentclass{imsart}

\usepackage{amsfonts}
\usepackage{amsmath}
\usepackage{amsthm}
\usepackage{amssymb}
\usepackage{dsfont}
\usepackage{epsfig}
\usepackage{apacite}

\RequirePackage[OT1]{fontenc}
\RequirePackage{amsthm,amsmath,natbib,verbatim}
\RequirePackage[colorlinks=true,citecolor=blue,linkcolor=blue,urlcolor=blue]{hyperref}


\startlocaldefs
\theoremstyle{plain}
\newtheorem{thm}{Theorem}
\newtheorem{lem}[thm]{Lemma}
\newcommand{\ind}{{\mathds{1}}}
\DeclareMathOperator{\Prob}{Pr}
\DeclareMathOperator{\Exp}{E}
\newcommand{\bs}{\boldsymbol}
\newcommand{\eqd}{\triangleq}
\newcommand{\myvec}{\bs}
\endlocaldefs

\begin{document}

\begin{frontmatter}
\title{Conservative Hypothesis Tests and Confidence Intervals using Importance Sampling}
\runtitle{Conservative Hypothesis Tests and Confidence Intervals}

\begin{aug}
\author{Matthew T. Harrison}\thanksref{t1}

\address{Division of Applied Mathematics \\ Brown University \\ Providence, RI 02912 \\
email: {\tt Firstname\_Lastname@Brown.edu}}

\thankstext{t1}{This work was supported in part by the National Science Foundation (NSF) grant DMS-1007593 and also, while the author was in the Department of Statistics at Carnegie Mellon University, by the NSF grant DMS-0240019 and the National Institutes of Health grant NIMH-2RO1MH064537.  The author thanks Stuart Geman, Jeffrey Miller, and Lee Newberg for helpful comments.  Shigeyoshi Fujisawa and Gy\"orgy Buzs{\'a}ki graciously provided the data for the example in Section \ref{s:jitter}.}
\runauthor{M.T. Harrison}

\affiliation{Brown University}

\end{aug}

\begin{abstract} {} Importance sampling is a common technique for Monte Carlo approximation, including Monte Carlo approximation of p-values.  Here it is shown that a simple correction of the usual importance sampling p-values creates {\em valid} p-values, meaning that a hypothesis test created by rejecting the null when the p-value is $\leq \alpha$ will also have a type I error rate $\leq \alpha$.  This correction uses the importance weight of the original observation, which gives valuable diagnostic information under the null hypothesis.  Using the corrected  p-values can be crucial for multiple testing and also in problems where evaluating the accuracy of importance sampling approximations is difficult.  Inverting the corrected  p-values provides a useful way to create Monte Carlo confidence intervals that maintain the nominal significance level and use only a single Monte Carlo sample.  Several applications are described, including accelerated multiple testing for a large neurophysiological dataset and exact conditional inference for a logistic regression model with nuisance parameters.
\end{abstract}


\begin{keyword}
\kwd{exact inference}
\kwd{Monte Carlo}
\kwd{multiple testing}
\kwd{p-value}
\kwd{Rasch model}
\kwd{valid}
\end{keyword}

\tableofcontents
\end{frontmatter}

\section{Introduction}


Importance sampling is a common technique for Monte Carlo approximation, including Monte Carlo approximation of p-values.  Besides its use in situations where efficient direct-sampling algorithms are unavailable, importance sampling can be used to accelerate the approximation of tiny p-values as needed for multiple hypothesis testing.  Importance sampling can also be used for Monte Carlo approximation of confidence intervals using a single Monte Carlo sample by inverting a family of hypothesis tests.  The practicality of these two uses of importance sampling --- accelerated multiple testing and Monte Carlo approximation of confidence intervals --- would seem to be limited by the fact that each places strong requirements on the importance sampling procedure.  Multiple testing controls can be sensitive to tiny absolute errors (but large relative errors) in small p-values, which would seem to demand either excessively large Monte Carlo sample sizes or unrealistically accurate importance sampling proposal distributions in order to reduce absolute Monte Carlo approximation errors to tolerable levels.  The confidence interval procedure uses a single proposal distribution to approximate probabilities under a large family of target distributions, which again would seem to demand large sample sizes in order to overcome the high importance sampling variability that is frequently encountered when the proposal distribution is not tailored to a specific target distribution.  Nevertheless, as shown here, simple corrections of the usual importance sampling p-value approximations can be used to overcome these difficulties, making importance sampling a practical choice for accelerated multiple testing and for constructing Monte Carlo confidence intervals.  

The p-value corrections that we introduce require negligible additional computation and still converge to the target p-value, but have the additional property that they are valid p-values, meaning the probability of rejecting a null hypothesis is $\leq\alpha$ for any specified level $\alpha$.  Hypothesis tests and confidence intervals constructed from the corrected p-value approximations are guaranteed to be conservative, regardless of the Monte Carlo sample size, while still behaving like the target tests and intervals for sufficiently large Monte Carlo sample sizes.  The combination of being both conservative and consistent turns out to be crucial in many applications where the importance sampling variability cannot be adequately controlled with practical amounts of computing, including multiple testing, confidence intervals, and any hypothesis testing situation where the true variability of the importance sampling algorithm is either large or unknown.  We demonstrate the practical utility of the correction with several examples below.     

Let $X$ denote the original observation (data set).  Assume that the null hypothesis specifies a known distribution $P$ for $X$.  For a specified test statistic $t$, the goal is to compute the p-value $p(X)$ defined by
\[ p(x) \eqd \Prob\left(t(X) \geq t(x)\right) \]
where the probability is computed under the null hypothesis.  Importance sampling can be used to Monte Carlo approximate $p(X)$ if the p-value cannot be determined analytically.  Let $\myvec{Y}\eqd(Y_1,\dotsc,Y_n)$ be an independent and identically distributed (i.i.d.)~sample from a distribution $Q$ (the {\em proposal distribution}) whose support includes the support of $P$ (the {\em target distribution}).  Then $p(X)$ can be approximated with either
\[ \begin{aligned}  \widehat p(X,\myvec{Y}) & \eqd \frac{\sum_{i=1}^n w(Y_i)\ind\bigl\{t(Y_i)\geq t(X)\bigr\}}{n} \\ 
\widetilde p(X,\myvec{Y}) & \eqd \frac{\sum_{i=1}^n w(Y_i)\ind\bigl\{t(Y_i)\geq t(X)\bigr\}}{\sum_{j=1}^n w(Y_j)} \end{aligned} \]
where $\ind$ is the indicator function and where the {\em importance weights} are defined to be
\[ w(x) \eqd \frac{P(x)}{Q(x)} \]
in the discrete case, and the ratio of densities in the continuous case.  Each of these are consistent approximations of $p(X)$ as the Monte Carlo sample size increases.  The former is unbiased and is especially useful for approximating extremely small p-values.  The latter can be evaluated even if the importance weights are only known up to a constant of proportionality.  Note that $n$ is the Monte Carlo sample size; the sample size or dimensionality of $X$ is irrelevant for the developments here.  The reader is referred to \citet{Liu:Monte:2001} for details and references concerning importance sampling and to \citet{Lehmann:Testing:2005} for hypothesis testing.

Here we propose the following simple corrections of $\widehat p$ and $\widetilde p$ that make use of the importance weight of the original observation, namely,
\[ \begin{aligned}  \widehat p_*(X,\myvec{Y}) & \eqd \frac{w(X)+\sum_{i=1}^n w(Y_i)\ind\bigl\{t(Y_i)\geq t(X)\bigr\}}{1+n} \\ 
\widetilde p_*(X,\myvec{Y}) & \eqd \frac{w(X)+\sum_{i=1}^n w(Y_i)\ind\bigl\{t(Y_i)\geq t(X)\bigr\}}{w(X)+\sum_{j=1}^n w(Y_j)} \end{aligned} \]
We will show that the corrected p-value approximations, while clearly still consistent approximations of the target p-value $p(X)$, are also themselves {\em valid} p-values, meaning 
\[ \Prob\bigl(\widehat p_*(X,\myvec{Y}) \leq \alpha\bigr)\leq \alpha \quad \quad \text{and} \quad \quad \Prob\bigl(\widetilde p_*(X,\myvec{Y}) \leq \alpha\bigr)\leq \alpha \]
for all $\alpha\in[0,1]$ and $n\geq 0$ under the null hypothesis, where the probability is with respect to the joint distribution of data and Monte Carlo sample.  
These simple corrections have far-reaching consequences and enable importance sampling to be successfully used in a variety of situations where it would otherwise fail.  Of special notes are the ability to properly control for multiple hypothesis tests and the ability to create valid confidence intervals using a single Monte Carlo sample.  


\section{Main results} \label{s:m}
    
The main results are that $\widehat p_*$ and $\widetilde p_*$ are valid p-values (Theorems \ref{t:hat} and \ref{t:tilde}). 
We generalize the introductory discussion in two directions.  First, we allow arbitrary distributions, so the importance weights become the Radon-Nikodym derivative $dP/dQ$.  In the discrete case this simplifies to the ratio of probability mass functions, as in the introduction, and in the continuous case this simplifies to the ratio of probability density functions.  Second, we allow the choice of test statistic to depend on $(X,Y_1,\dotsc,Y_n)$ as long as the choice is invariant to permutations of $(X,Y_1,\dotsc,Y_n)$.  We express this mathematically by writing the test statistic, $t$, as a function of two arguments: the first is the same as before, but the second argument takes the entire sequence $(X,Y_1,\dotsc,Y_n)$, although we require that $t$ is invariant to permutations in the second argument.  For example, we may want to transform the sequence $(X,Y_1,\dotsc,Y_n)$ in some way, either before or after applying a test-statistic to the individual entries.  As long as the transformation procedure is permutation invariant (such as centering and scaling, or converting to ranks), everything is fine.  Transformations are often desirable in multiple testing contexts for improving balance \citep{westfall1993resampling}.

We begin with the precise notation and assumptions for the theorems.  $P$ and $Q$ are probability distributions defined on the same measurable space $(S,\mathcal{S})$ with $P\ll Q$ (meaning that sets with positive $P$ probability also have positive $Q$ probability), and $w$ is a fixed, nonnegative version of the Radon-Nikodym derivative $dP/dQ$.  $\mathcal{M}$ denotes the set of all $(n+1)!$ permutations $\myvec{\pi} \eqd (\pi_0,\dotsc,\pi_n)$ of $(0,\dotsc,n)$.  For $\myvec{\pi}\in\mathcal{M}$ and $\myvec{z} \eqd (z_0,\dotsc,z_n)$, we define $\myvec{z^{(\myvec{\pi})}} \eqd (z_{\pi_0},\dotsc,z_{\pi_n})$.
Assume that $t:S\times S^{n+1}\mapsto[-\infty,\infty]$ has the property that
\[ t(a,\myvec{z})  = t(a,\myvec{z^{(\myvec{\pi})}}) \]
for all $a\in S$, $\myvec{z}\in S^{n+1}$, and $\myvec{\pi}\in\mathcal{M}$.

For $\myvec{z}\in S^{n+1}$ define
\begin{align*} 
\widehat p_*(\myvec{z}) & \eqd  \frac{\sum_{i=0}^n w(z_i)\ind\{t(z_i,\myvec{z}) \geq t(z_0,\myvec{z})\}}{n+1} \\ 
\widetilde p_*(\myvec{z}) & \eqd    \frac{\sum_{i=0}^n w(z_i)\ind\{t(z_i,\myvec{z}) \geq t(z_0,\myvec{z})\}}{\sum_{j=0}^n w(z_j)} \end{align*}
where we take $0/0 \eqd 0$.  Let $X$ have distribution $P$ and let $Y_0,Y_1,\dotsc,Y_n$ be an i.i.d.~sample from $Q$, independent of $X$.  For notational convenience define $\myvec{Z} \eqd (Z_0,Z_1,\dotsc,Z_n)$ by
\[ Z_0 \eqd  X, \ Z_1 \eqd Y_1 , \ \dotsc , \ Z_n \eqd Y_n  \]
so that the corrected p-values are $\widehat p_*(\myvec{Z})$ and $\widetilde p_*(\myvec{Z})$.  Then,  

\begin{thm} \label{t:hat} $\Prob\bigl(\widehat p_*(\myvec{Z}) \leq \alpha\bigr) \leq \alpha$ for all $\alpha\in[0,1]$.
\end{thm}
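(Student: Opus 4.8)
The plan is to reduce this probabilistic statement to a single deterministic inequality, via a change of measure that makes all coordinates exchangeable, followed by a symmetrization argument that exploits the permutation invariance of $t$.

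First I would change measure so that the observation and the Monte Carlo sample play symmetric roles. Since $Z_0=X\sim P$ while $Z_1,\dotsc,Z_n$ are i.i.d.\ $Q$ and independent of $X$, the law of $\myvec{Z}$ has Radon--Nikodym derivative $w(z_0)$ with respect to the product measure $Q^{n+1}$ under which $Z_0,\dotsc,Z_n$ are i.i.d.\ $Q$. Hence
\[ \Prob\bigl(\widehat p_*(\myvec{Z})\leq\alpha\bigr)=\Exp_{Q^{n+1}}\bigl[w(Z_0)\,\ind\{\widehat p_*(\myvec{Z})\leq\alpha\}\bigr], \]
where now $\myvec{Z}\sim Q^{n+1}$ is fully exchangeable.

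Next I would symmetrize over which coordinate is treated as the observation. For $k\in\{0,\dotsc,n\}$ let $\widehat p_*^{(k)}(\myvec{z})$ be defined exactly as $\widehat p_*$ but with $z_k$ in place of $z_0$ as the reference, so $\widehat p_*^{(0)}=\widehat p_*$. Letting $\myvec{\pi}$ be the transposition swapping $0$ and $k$, the invariance $t(a,\myvec{z})=t(a,\myvec{z^{(\myvec{\pi})}})$ together with the fact that relabelling the summation index leaves $\sum_{i=0}^n$ unchanged gives the identities $\widehat p_*^{(0)}(\myvec{z^{(\myvec{\pi})}})=\widehat p_*^{(k)}(\myvec{z})$ and $w(z^{(\myvec{\pi})}_0)=w(z_k)$. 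Since $\myvec{Z}$ is exchangeable under $Q^{n+1}$, applying $\myvec{\pi}$ inside the expectation shows that all $n+1$ quantities $\Exp_{Q^{n+1}}[w(Z_k)\ind\{\widehat p_*^{(k)}(\myvec{Z})\leq\alpha\}]$ coincide, so averaging them yields
\[ \Prob\bigl(\widehat p_*(\myvec{Z})\leq\alpha\bigr)=\Exp_{Q^{n+1}}\Bigl[\tfrac{1}{n+1}\textstyle\sum_{k=0}^n w(Z_k)\,\ind\{\widehat p_*^{(k)}(\myvec{Z})\leq\alpha\}\Bigr]. \]

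It then suffices to prove the pointwise deterministic bound $\tfrac{1}{n+1}\sum_{k=0}^n w(z_k)\ind\{\widehat p_*^{(k)}(\myvec{z})\leq\alpha\}\leq\alpha$ for every fixed $\myvec{z}$, since the expectation of a quantity bounded by the constant $\alpha$ is at most $\alpha$. This inequality is where the real work lies and is the main obstacle. Fixing $\myvec{z}$ and writing $t_i\eqd t(z_i,\myvec{z})$ and $w_i\eqd w(z_i)$, I would note that $(n+1)\widehat p_*^{(k)}(\myvec{z})=S_k$ with $S_k\eqd\sum_{i:\,t_i\geq t_k}w_i$, so that $S_k$ depends on $k$ only through $t_k$ and, because $w\geq 0$, is nonincreasing as $t_k$ increases. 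Grouping the indices by their common value of $t$, the indices $k$ with $S_k\leq(n+1)\alpha$ therefore form an upper tail in the $t$-ordering; summing $w_k$ over this tail collapses to the common value of $S_k$ at its lowest included group, which is $\leq(n+1)\alpha$ by the very inequality defining that group. Dividing by $n+1$ completes the argument. Handling the ties through this grouping, rather than treating only the easier case of distinct $t_i$, is the delicate point, while nonnegativity of $w$ and the convention $0/0\eqd 0$ dispose of the degenerate cases (an empty qualifying tail or zero weights).
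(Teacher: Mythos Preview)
Your proposal is correct and follows essentially the same approach as the paper: a change of measure from $P\times Q^n$ to $Q^{n+1}$ (the paper's Lemma~\ref{l:hat}), symmetrization over which coordinate serves as the observation, and then the deterministic weighted-rank inequality (the paper's Lemma~\ref{l}). Your treatment of the deterministic step---recognizing that $S_k$ depends only on $t_k$, so the qualifying set is an upper level set whose total weight coincides with $S_{k^*}$ at its minimal included level---is exactly the paper's argument, just phrased via tie-groups rather than via sorting and a maximal index $k^*$.
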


\begin{thm} \label{t:tilde} $\Prob\bigl(\widetilde p_*(\myvec{Z}) \leq \alpha\bigr) \leq \alpha$ for all $\alpha\in[0,1]$.
\end{thm}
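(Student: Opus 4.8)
The plan is to treat both theorems by the same device---a change of measure followed by a symmetrization---which reduces each probability to a single deterministic inequality about weighted ranks. Two structural facts drive everything. First, since $Z_0=X\sim P$ with $P\ll Q$ and $Z_1,\dots,Z_n$ are i.i.d.\ $Q$, the law of $\myvec{Z}$ has density $w(z_0)$ with respect to the law of $\myvec{W}\eqd(Y_0,Y_1,\dots,Y_n)$, an i.i.d.\ $Q$-sample; this is precisely the role of the otherwise unused variable $Y_0$. Second, $\myvec{W}$ is exchangeable under $Q^{n+1}$, and $t$ is invariant to permutations of its second argument.

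First I would change measure: for any bounded measurable $g$,
\[ \Exp\bigl[g(\myvec{Z})\bigr]=\Exp\bigl[w(W_0)\,g(\myvec{W})\bigr], \]
which I apply with $g(\myvec{z})=\ind\{\widehat p_*(\myvec{z})\le\alpha\}$ and, separately, $g(\myvec{z})=\ind\{\widetilde p_*(\myvec{z})\le\alpha\}$. Next I would symmetrize: because $\myvec{W}^{(\myvec{\pi})}\overset{d}{=}\myvec{W}$ for every $\myvec{\pi}\in\mathcal{M}$,
\[ \Exp\bigl[w(W_0)\,g(\myvec{W})\bigr]=\Exp\Bigl[\tfrac{1}{(n+1)!}\sum_{\myvec{\pi}\in\mathcal{M}}w(W_{\pi_0})\,g(\myvec{W}^{(\myvec{\pi})})\Bigr]. \]
The permutation-invariance of $t$ makes $\widehat p_*(\myvec{W}^{(\myvec{\pi})})$ and $\widetilde p_*(\myvec{W}^{(\myvec{\pi})})$ depend on $\myvec{\pi}$ only through $\pi_0$: each equals the corresponding p-value computed as if $W_{\pi_0}$ were the observation inside the fixed multiset $\{W_0,\dots,W_n\}$. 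Collecting the $n!$ permutations with $\pi_0=k$ fixed, both reductions collapse to
\[ \Prob\bigl(\widehat p_*(\myvec{Z})\le\alpha\bigr)=\Exp\Bigl[\tfrac{1}{n+1}\sum_{k=0}^n w(W_k)\,\ind\{R_k\le\alpha\}\Bigr], \]
together with the analogous identity for $\widetilde p_*$ with $\widetilde R_k$ in place of $R_k$, where, writing $\omega_i\eqd w(W_i)$, $T_i\eqd t(W_i,\myvec{W})$, and $S_k\eqd\sum_{j=0}^n\omega_j\ind\{T_j\ge T_k\}$, one has $R_k=S_k/(n+1)$ and $\widetilde R_k=S_k/\sum_{j=0}^n\omega_j$.

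The heart of the argument is a deterministic lemma: for any reals $T_0,\dots,T_n$, nonnegative weights $\omega_0,\dots,\omega_n$, and any $c\ge0$,
\[ \sum_{k=0}^n\omega_k\,\ind\{S_k\le c\}\le c. \]
I would prove this by grouping indices by the distinct values of the $T_k$: on each level set $\{k:T_k=v\}$ the quantity $S_k$ equals the common upper-tail weight $\sigma_v\eqd\sum_{j:T_j\ge v}\omega_j$, and the set of levels with $\sigma_v\le c$ is a top segment, so the contributing weights telescope to the largest such $\sigma_v$, which is $\le c$. (Equivalently, $\{k:S_k\le c\}$ is upward closed in $T$, and its total weight equals $S_{k^*}\le c$ for the minimizing $k^*$.) For Theorem \ref{t:hat} I apply the lemma with $c=\alpha(n+1)$, since $\{R_k\le\alpha\}=\{S_k\le\alpha(n+1)\}$, giving the \emph{pointwise} bound $\tfrac{1}{n+1}\sum_k\omega_k\ind\{R_k\le\alpha\}\le\alpha$, and taking expectations finishes.

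The step I expect to be the real obstacle is Theorem \ref{t:tilde}, where the lemma applied with $c=\alpha\sum_{j}\omega_j$ only yields $\tfrac{1}{n+1}\sum_k\omega_k\ind\{\widetilde R_k\le\alpha\}\le\tfrac{\alpha}{n+1}\sum_{j}\omega_j$, a bound that is \emph{not} pointwise $\le\alpha$ because $\sum_j\omega_j$ can exceed $n+1$. The resolution is to argue in expectation rather than pointwise: taking expectations and using $\Exp_Q[w]=\int(dP/dQ)\,dQ=1$, hence $\Exp[\sum_{j}\omega_j]=n+1$, recovers $\Prob(\widetilde p_*(\myvec{Z})\le\alpha)\le\alpha$. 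I would also verify the degenerate conventions ($0/0\eqd0$ when $\sum_j\omega_j=0$, and $c=0$ when $\alpha=0$), which are routine since $S_k\ge\omega_k$ forces every term to vanish there.
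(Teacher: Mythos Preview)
Your argument is correct. For Theorem~\ref{t:hat} it is essentially identical to the paper's: your change of measure plus symmetrization over the i.i.d.\ $Q$-sample $\myvec{W}=(Y_0,\dots,Y_n)$ is exactly the paper's Lemma~\ref{l:hat}, and your deterministic rank inequality is the paper's Lemma~\ref{l}.

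For Theorem~\ref{t:tilde}, however, you take a genuinely different route. The paper does \emph{not} reuse the $\myvec{W}$-representation; instead it introduces an auxiliary exchangeable vector $\myvec{U}\eqd\myvec{Z}^{(\myvec{\Pi})}$ obtained by applying a uniformly random permutation to $\myvec{Z}$, computes the Radon--Nikodym derivative $dP_{\myvec{Z}}/dP_{\myvec{U}}(\myvec{u})=(n+1)w(u_0)/\sum_j w(u_j)$ (Lemma~\ref{l:tilde0}), and from this derives the representation $\Exp[f(\myvec{Z})]=\Exp\bigl[\sum_k \frac{w(U_k)}{\sum_j w(U_j)}f(\myvec{U}^k)\bigr]$ (Lemma~\ref{l:tilde}). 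Because the weights here are already \emph{normalized}, Lemma~\ref{l} applied with $w_\ell=w(U_\ell)/\sum_j w(U_j)$ yields a \emph{pointwise} bound $\le\alpha$ inside the expectation. Your approach avoids the $\myvec{U}$-construction and the associated density computation entirely: you stay with the unnormalized representation $\Exp\bigl[\tfrac{1}{n+1}\sum_k w(W_k)\ind\{\widetilde R_k\le\alpha\}\bigr]$, apply Lemma~\ref{l} with $c=\alpha\sum_j\omega_j$ to obtain a bound that is only $\le\tfrac{\alpha}{n+1}\sum_j\omega_j$ pointwise, and then close by the single extra observation $\Exp_Q[w]=1$. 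This is more elementary---one lemma fewer and no auxiliary randomization---at the cost of losing the pointwise-inside-the-expectation structure. Both proofs rest on the same combinatorial core (Lemma~\ref{l}); the difference is purely in how the self-normalization in $\widetilde p_*$ is absorbed.
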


Proofs are in the Appendix.  
The theorems do not require any special relationships among $P$, $Q$, $t$, or $n$.  For example, in parametric settings $Q$ does not need to be in the same model class as the null and/or alternative.  Validity of the corrected p-values is ensured even for unusual cases such as $n=0$ or importance weights with infinite variance.  We discuss some practical considerations for choosing $Q$ in the next section (presumably, $P$, $t$ and $n$ are dictated by the application and the computational budget).  Validity of the corrected p-values is well known for the special case of direct sampling, i.e., $Q\equiv P$ and $w\equiv 1$.  For Markov chain Monte Carlo (MCMC) approaches, \citet{besag1989generalized} demonstrate how to generate valid p-value approximations using techniques that are unrelated to the ones here.  

The theorems continue to hold if each $Y_k$ is chosen from a different $Q$, say $Q_k$, as long as the sequence $Q_0,Q_1,\dotsc,Q_n$ is itself i.i.d.~from some distribution over proposal distributions.  The $k$th importance weight is now a fixed version of the Radon-Nikodym derivative $dP/dQ_k$.  This generalization can be useful in practice when each $Y_k$ is generated hierarchically by first choosing an auxiliary variable $V_k$ and then, given $V_k$, choosing $Y_k$ from some distribution $Q^{V_k}$ that depends on $V_k$.  If the marginal distribution (i.e., $Q$) of $Y_k$ is not available, one can use importance weights based on the conditional distribution (i.e., $Q^{V_k}$) of $Y_k$ given $V_k$.  This is equivalent to using random proposal distributions as described above.  The drawback of combining this approach with the p-value corrections, is that the importance weight of the original observation is evaluated using $dP/dQ_0$, where $Q_0$ is a randomly chosen proposal distribution.  This further increases the Monte Carlo randomness already inherent in the p-value approximations.  Note that generalizing the theorems to allow for random proposals requires no additional work.  If $\nu$ is the joint distribution of each $(Q_k,Y_k)$ and $\nu_Q$ is the marginal distribution of each $Q_k$, simply apply the theorems with $\nu_Q\times P$ in place of $P$, $\nu$ in place of $Q$, $(Q_0,X)$ in place of $X$, and $(Q_k,Y_k)$ in place of $Y_k$.  If one uses a regular conditional distribution to define the importance weight $[d(\nu_Q\times P)/d\nu](Q_k,x)$, then it will simplify to a version of $[dQ_k/dP](x)$.

Finally, we recall the well known fact that any valid family of p-values can be inverted in the usual way to give valid confidence intervals.  In particular, consider a collection of distributions $\{P_\theta:\theta\in\Theta\}$ and, for each $\theta\in\Theta$, let $p_*(\theta,\myvec{Z})$ be any valid p-value for testing the null hypothesis of $P_\theta$.  Fix $\alpha\in[0,1]$ and define the random set 
\[ C^\alpha(\myvec{Z}) \eqd \bigl\{\theta\in\Theta:p_*(\theta,\myvec{Z})>\alpha\} \]
Then $C^\alpha(\myvec{Z})$ is a $1-\alpha$ confidence set for $\theta$ because
\[ \Prob_\theta\bigl(C^\alpha(\myvec{Z})\ni\theta\bigr) = 1 - \Prob_\theta\bigl(p_*(\theta,\myvec{Z})\leq\alpha\bigl) \geq 1-\alpha \]
where $\Prob_\theta$ is calculated under the null hypothesis of $P_\theta$.  An appealing feature about inverting either $\widehat p_*$ or $\widetilde p_*$ is that the same importance samples can (in principle) be used for testing each $\theta$.  Only the importance weights (and perhaps the test statistic) vary with $\theta$.  Section \ref{s:rasch} below illustrates this application.  The idea of using importance sampling to construct confidence intervals from a single Monte Carlo sample was pointed out in \citet{PJGgeyer1992constrained}
.  See \citet{bolviken1996confidence} and \citet{garthwaite1992generating} for examples of other ways to create Monte Carlo confidence intervals.

\section{Practical considerations} \label{s:power}

It is clear that each of the corrected p-values has the same asymptotic behavior as its uncorrected counterpart, so for sufficiently large $n$ they are essentially equivalent.  But in practice, $n$ will often be too small.  The concern for the investigator is that the corrected p-values may behave much worse than the uncorrected ones for practical choices of $n$.  In particular, the investigator is concerned about the following situation: the null hypothesis is false; the target p-value, $p$, is small; the uncorrected approximations, $\widehat p$ or $\widetilde p$, are also small; but the corrected approximations, $\widehat p_*$ or $\widetilde p_*$, are large.  Careful choice of $Q$ can lower the chances of this situation.

Each of the corrected p-values can be expressed as an interpolation between their respective uncorrected versions and another (non-Monte Carlo) valid p-value:
\begin{gather}
\widehat p_*(\myvec{z}) = \frac{1}{n+1}w(x) + \left(1-\frac{1}{n+1}\right)\widehat p(\myvec{z}) \label{e:phat2} \\
\widetilde p_*(\myvec{z}) = \frac{w(x)}{w(x)+\sum_{j=1}^n w(y_j)} + \left(1-\frac{w(x)}{w(x)+\sum_{j=1}^n w(y_j)}\right)\widetilde p(\myvec{z}) \label{e:ptilde2}
\end{gather}
where we are using the shorthand $\myvec{z} \eqd (x,y_1,\dotsc,y_n)$.    In both cases, power suffers when $X$ comes from a distribution in the alternative hypothesis, but $w(X)$ is typically large relative to $w(Y_1),\dotsc,w(Y_n)$.  Since $w(Y_k)$ always has mean 1, problems might arise for alternatives that tend to make $w(X)$ much larger than 1.  This problem can be avoided by choosing $Q$ so that it gives more weight than does $P$ to regions of the sample space that are more typical under alternatives than they are under $P$.  The problem can also be avoided by choosing $Q$ to be similar to $P$ so that the weights are close to 1 throughout the sample space.  Most proposal distributions are designed with one of these two goals in mind, so the corrected p-values should behave well for well-designed proposal distributions.
In practice, however, proposal distributions can be quite bad, and it can be helpful to look more closely at how the proposal affects the power of the corrected p-values.  

From \eqref{e:phat2} we see that $\widehat p_*$ is an interpolation between $\widehat p$ and $w(x)$, the latter of which is a valid p-value.  Validity of $w(X)$ can be seen either by Theorem \ref{t:hat} for $n=0$ or by the simple calculation
\begin{equation} \label{e:NP} \Prob(w(X) \leq \alpha) = \sum_x \ind\bigl\{P(x)\leq\alpha Q(x)\bigr\} P(x) \leq \sum_x \alpha Q(x) = \alpha \end{equation}
which easily generalizes to more abstract settings.
Testing $w(X)=P(X)/Q(X)\leq\alpha$ is simply a likelihood ratio test (LRT) of $P$ versus $Q$, although using the critical value of $\alpha$ will give a test of size smaller than $\alpha$.  The test can be strongly conservative, because the bound in \eqref{e:NP} can be far from tight.  So $\widehat p_*$ is an interpolation between a conservative LRT of $P$ versus $Q$ and an (often liberal) importance sampling Monte Carlo approximation of the target p-value.  If the effect of $w(X)$ on $\widehat p_*$ has not disappeared (such as for small $n$), then $\widehat p_*$ is a p-value for the original null, $P$, versus a new alternative that is some amalgamation of the original alternative and the proposal distribution $Q$.  If $Q$ is not in the alternative (as it often will not be), this modified alternative is important to keep in mind when interpreting any rejections.

The effect of $Q$ on interpreting rejections is especially important in multiple-testing situations.  In these settings, importance sampling is often used to accelerate the approximation of tiny p-values via $\widehat p$.  If $\widehat p\approx 0$, however, then $\widehat p_*\approx w(x)/(n+1)$ and $Q$ plays a critical role in determining which null hypotheses are rejected after correcting for multiple comparisons.  The investigator can take advantage of this by ensuring that a rejection of $P$ in favor of $Q$ is sensible for the problems at hand, and by ensuring that events with small p-values will be heavily weighted by $Q$.  

Turning to \eqref{e:ptilde2}, we see that $\widetilde p_*$ is an interpolation between $1$ and $\widetilde p$.  Since $\widetilde p \leq 1$, we always have
\[ \widetilde p_* \geq \widetilde p \]
so that $\widetilde p_*$ always leads to a more conservative test than $\widetilde p$.  The degree of conservativeness is controlled by the ratio $w(x)/\sum_j w(y_j)$.  If the ratio is small, then the correction ensures validity with little loss of power.  If it is large, there may be a substantial loss of power when using the correction.  The ratio will be approximately $1/n$ if $P$ and $Q$ are similar, which is often the design criteria when using $\widetilde p$.

An important caveat for the main results is that $Q$ is {\em not} allowed to depend on the observed value of $X$.  This caveat precludes one of the classical uses of importance sampling, namely, using the observed value of $t(X)$ to design a $Q$ that heavily weights the event $\{x:t(x)\geq t(X)\}$.  In many cases, however, since the functional form of $t$ is known, one can {\em a priori} design a $Q$ that will heavily weight the event $\{x:t(x)\geq t(X)\}$ whenever the p-value would be small.  For example, given a family of proposal distributions $\{Q_\ell\}_\ell$, each of which might be useful for a limited range of observed values of $t(X)$, one can use finite mixture distributions of the form $Q=\sum_{\ell=1}^L \lambda_\ell Q_\ell$ to obtain more robust performance.  For similar reasons, mixture distributions are also useful for creating Monte Carlo confidence intervals in which the same proposal distribution is used for a family of target distributions.  The examples in Sections \ref{s:rasch} and \ref{s:jitter} have more details, and \citet{hesterberg1995weighted} contains a more general discussion of the utility of mixture distributions for importance sampling.  Finally, we note that this caveat about $Q$ does not preclude conditional inference.  In conditional inference $P$ is the null conditional distribution of $X$ given some appropriate $A(X)$, and all of the analysis takes place after conditioning on $A(X)$.  The choice of $Q$ (and also $t$, for that matter) can thus depend on $A(X)$, but not on additional details about $X$.       

\section{Applications} \label{s:ex}

\subsection{Accelerating multiple permutation tests} \label{s:ex1}

Consider a collection of $N$ datasets.  The $i$th dataset $X^i \eqd (\myvec{V}^i,\myvec{L}^i)$ contains a sample of values $\myvec{V}^i \eqd (V^i_1,\dots,V^i_{m_i})$ and corresponding labels $\myvec{L}^i \eqd (L^i_1,\dotsc,L^i_{m_i})$.  The distributions over values and labels are unknown and perhaps unrelated across datasets.  We are interested in identifying which datasets show a dependence between the values and the labels.  From a multiple hypothesis testing perspective, the $i$th null hypothesis is that $\myvec{V}^i$ and $\myvec{L}^i$ are independent, or more generally, that the values of $\myvec{L}^i$ are exchangeable conditioned on the values of $\myvec{V}^i$.  Given a test statistic $t(x) \eqd t(\myvec{v},\myvec{\ell})$, a permutation test p-value for the $i$th null hypothesis is given by 
\[ p(X^i)  \eqd  \frac{1}{m_i!}\sum_{\myvec{\pi}} \ind\bigl\{t(\myvec{V}^i,(\myvec{L}^i)^{(\myvec{\pi})})\geq t(\myvec{V}^i,\myvec{L}^i)\bigr\}  \]
where the sum is over all permutations $\myvec{\pi} \eqd (\pi_1,\dotsc,\pi_{m_i})$ of $(1,\dotsc,m_i)$, and where the notation $\myvec{\ell^{(\myvec{\pi})}} \eqd (\ell_{\pi_1},\dotsc,\ell_{\pi_{m_i}})$ denotes a permuted version of the elements of $\myvec{\ell}$ using the permutation $\myvec{\pi}$.  The beauty of the permutation test is that it converts a large composite null (independence of values and labels) into a simple null (all permutations of the labels are equally likely) by conditioning on the values and the labels but not their pairings.  The $i$th null distribution, $P^i$, becomes the uniform distribution over permutations (of the labels).  The Bonferroni correction can be used to control the family-wise error rate (FWER), i.e., the probability of even a single false rejection among all $N$ datasets \citep{Lehmann:Testing:2005}.  In particular, rejecting null hypothesis $i$ whenever $p(X^i)\leq\alpha/N$ ensures that the probability of even a single false rejection is no more than $\alpha$.  Although computing $p$ exactly is often prohibitive, Monte Carlo samples from $P^i$ are readily available for approximating $p$, and importance sampling can be used to accelerate the approximation of tiny p-values.  Bonferroni is still sensible as long as the Monte Carlo approximate p-values are valid p-values.
 
Here is a specific simulation example.  Consider $N=10^4$ independent data\-sets, each with $m_i=m=100$ real-valued values and corresponding binary labels.  In each case there are 40 labels of 1 and 60 labels of 0.  In $9990$ datasets, the labels and values are independent, in which case the values are i.i.d.~standard Cauchy (i.e., with scale parameter 1).  In 10 of the datasets, the labels and values are dependent, in which case the values associated to label 0 are i.i.d.~standard Cauchy and the values associated to label 1 are i.i.d.~standard Cauchy plus 2 (i.e., a standard Cauchy shifted to the right by 2).  (This particular example was chosen because standard tests like the the two-sample $t$-test and the Wilcoxon rank-sum test tend to perform poorly.)  For the alternative hypothesis that label 1 tends to have larger values than label 0, a sensible test statistic is the difference in medians between values associated to label 1 and values associated to label 0, namely,
\[ t(x)  \eqd  t(\myvec{v},\myvec{\ell})  \eqd  \text{median}(v_j:\ell_j=1)-\text{median}(v_j:\ell_j=0)  \] 

Table \ref{t:intro} shows the Bonferroni-corrected performance of several different Monte Carlo approaches as the Monte Carlo sample size ($n$) increases.  The approximations $\bar p$ and $\bar p_*$ refer to $\widehat p$ and $\widehat p_*$, respectively, for the special case of $Q\equiv P$ and $w\equiv 1$, i.e., direct sampling (all permutations are equally likely).  Using either $\bar p$ or $\bar p_*$ would be the standard method of approximating $p$.  The approximations $\widehat p$ and $\widehat p_*$ are based on a proposal distribution that prefers permutations that pair large values of $t$ with label $1$ (see the Appendix for details).  In each case we reject when the approximate p-value is $\leq \alpha/N$.  The final approximation $\widehat q$ refers to a Wald $1-\alpha/(2N)$ upper confidence limit for $\widehat p$ using the same importance samples to estimate a standard error.  For $\widehat q$ we reject whenever $\widehat q \leq \alpha/(2N)$, which, if the confidence limits were exact, would correctly control the FWER at level $\alpha$.   

\begin{table}[h]
\caption{Bonferroni-corrected testing performance for different p-value approximations versus Monte Carlo sample size ($n$); $H_0$ is false for $10$ out of $N=10^4$ tests}
\label{t:intro}
\centering
\begin{tabular}{ c | c c c c || c c c c} 
\multicolumn{1}{c}{ } & \multicolumn{4}{c}{\# correct rejections} & \multicolumn{4}{c}{\# incorrect rejections} \\
$n$ & $10^1$ & $10^2$ & $10^3$ & $10^4$ & $10^1$ & $10^2$ & $10^3$ & $10^4$ \\ \hline
$\bar p$ & 10 & 10 & 10 & 10 & 908 & 87 & 12 & 1 \\ 
$\bar p_*$ & 0 & 0 & 0 & 0 & 0 & 0 & 0 & 0 \\ \hline
$\widehat p$ & 9 & 8 & 7 & 5 & 7579 & 1903 & 2 & 0 \\
$\widehat p_*$ & 7 & 6 & 6 & 5 & 0 & 0 & 0 & 0 \\ \hline
$\widehat q$ & 6 & 5 & 4 & 3 & 4545 & 585 & 0 & 0
\end{tabular}
\end{table}

Only $\widehat p_*$ works well.  The approximations that are not guaranteed to be valid ($\bar p$, $\widehat p$, $\widehat q$) require excessively large $n$ before the false detection rate drops to acceptable levels.  The cases $n=10^3$ and $n=10^4$ (even the largest of which was still too small for $\bar p$ to reach the Bonferonni target of zero false rejections) are computationally burdensome, and the situation only worsens as $N$ increases.  Furthermore, in a real problem the investigator has no way of determining that $n$ is large enough.  The confidence limit procedure similarly failed because the estimated standard errors are too variable.

The valid p-values ($\bar p_*$, $\widehat p_*$) ensure that the Bonferonni correction works regardless of $n$, but $\bar p_*\geq 1/(n+1)$, so it is not useful after a Bonferonni correction unless $n$ is extremely large.  The good performance of $\widehat p_*$ requires the combination of importance sampling, which allows tiny p-values to be approximated with small $n$, and the validity correction introduced in this paper, which allows multiple-testing adjustments to work properly. 

\subsection{Exact inference for covariate effects in Rasch models} \label{s:rasch}

Let $X=(X_{ij}:i=1,\dotsc,M; \ j=1,\dotsc,N)$ be a binary $M\times N$ matrix.  Consider the following logistic regression model for $X$. The $X_{ij}$ are independent Bernoulli$(p_{ij})$ where
\begin{equation} \label{e:rasch} \log\frac{p_{ij}}{1-p_{ij}}  \eqd  \kappa + \alpha_i + \beta_j + \theta v_{ij} \end{equation}
for unknown coefficients $\kappa$, $\myvec{\alpha}=(\alpha_1,\dotsc,\alpha_M)$, and $\myvec{\beta}=(\beta_1,\dotsc,\beta_N)$, and known covariates $\myvec{v}=(v_{ij}:i=1,\dotsc,M; \ j=1,\dotsc,N)$.  The special case $\theta=0$ is called the Rasch model and is commonly used to model the response of $M$ subjects to $N$ binary questions \citep{rasch1961general}.  In this example, we discuss inference about $\theta$ when $\kappa,\myvec{\alpha},\myvec{\beta}$ are treated as nuisance parameters.

Consider first the problem of testing the null hypothesis of $\theta=\theta'$ versus the alternative of $\theta\neq\theta'$.  Conditioning on the row and column sums removes the nuisance parameters, and the original composite null hypothesis reduces to the simple (conditional) null hypothesis of $X\sim P_{\theta'}$, where $P_{\theta'}$ is the conditional distribution of $X$ given the row and column sums for the model in \eqref{e:rasch} with $\kappa=\alpha_1=\dotsb=\alpha_M=\beta_1=\dotsb=\beta_N=0$ and $\theta=\theta'$.  A sensible test statistic is the minimal sufficient statistic for $\theta$, namely,
\[ t(x)  \eqd  \sum_{ij}x_{ij}v_{ij} \]
Since $t(X)$ has power for detecting $\theta > \theta'$ and $-t(X)$ has power for detecting $\theta < \theta'$, it is common to combine upper- and lower-tailed p-values into a single p-value, $p^\pm(\theta',X)$, defined by
\begin{align} p^+(\theta,x) &  \eqd  \Prob_\theta\bigl(t(X)\geq t(x)\bigr) \notag
\\ p^-(\theta,x) &  \eqd  \Prob_\theta\bigl(-t(X)\geq -t(x)\bigr) \notag \\
 p^\pm(\theta,x) &  \eqd  \min\bigl\{1,2\min\{p^+(\theta,x),p^-(\theta,x)\}\bigr\} \label{e:pm}
\end{align}
where $\Prob_\theta$ uses $X\sim P_\theta$.
There are no practical algorithms for computing $p^\pm(\theta',X)$ nor for direct sampling from $P_{\theta'}$.  The importance sampling algorithm in \citet{Chen:Sequential:2005} (designed for the case $\theta=0$) can be modified, however, to create a sensible and practical proposal distribution, say $Q_\theta$, for any $P_\theta$.  The details of these proposal distributions will be reported elsewhere.  $Q_\theta(x)$ can be evaluated, but $P_\theta(x)$ is known only up to a normalization constant, so we must use $\widetilde p$ and $\widetilde p_*$.  Here we compute two Monte Carlo p-value approximations, one for each of $t$ and $-t$, and combine them as in \eqref{e:pm}.  Validity of the underlying p-values ensures validity the combined p-value.

Confidence intervals for $\theta$ can be constructed in the usual way by inverting this family of tests.  In particular, a $1-\alpha$ confidence set for $\theta$ is the set of $\theta'$ for which we do not reject the null hypothesis of $\theta=\theta'$ at level $\alpha$.  An annoying feature of inverting Monte Carlo hypothesis tests is that in most cases a new sample is needed for each $\theta'$, which increases the computational burden and creates pathologically shaped confidence sets, because the Monte Carlo randomness varies across $\theta'$ \citep{bolviken1996confidence}.  Importance sampling suffers from the same problems, {\em unless} the same importance samples, and hence the same proposal distribution, are used for testing each $\theta'$.  

To create a common proposal distribution that might work well in practice, we use a mixture of proposals designed for specific $\theta$.  In particular, define
\[ Q  \eqd  \frac{1}{L}\sum_{\ell=1}^L Q_{\theta_\ell} \]
where $(\theta_1,\dotsc,\theta_L)$ are fixed and each $Q_{\theta_\ell}$ is a proposal distribution appropriate for $P_{\theta_\ell}$ as mentioned earlier.  Here we use $L=601$ and $(\theta_1,\dotsc,\theta_L)=(-6.00,-5.98,-5.96,\dotsc,6.00)$.  For any $\theta$, the importance weights are 
\[ w(\theta,x)  \eqd  \frac{P_\theta(x)}{Q(x)} = c_\theta\frac{\exp\bigl(\theta t(x)\bigr)}{\sum_\ell Q_{\theta_\ell}(x)} \]
for any binary matrix $x$ with the correct row and column sums, where $c_\theta$ is an unknown constant that is not needed for computing $\widetilde p$ and $\widetilde p_*$.  
We will use $\widetilde p^+(\theta,\myvec{z})$ and $\widetilde p^-(\theta,\myvec{z})$ to denote $\widetilde p(\myvec{z})$ when computed using the importance weights $w(\theta,\cdot)$ and the test statistics $+t$ and $-t$, respectively, where $\myvec{z} \eqd (x,y_1,\dotsc,y_n)$.  Similarly, define $\widetilde p_*^+(\theta,\myvec{z})$ and $\widetilde p_*^-(\theta,\myvec{z})$.  The upper and lower p-values can be combined via
\[ \widetilde p^\pm = \min\bigl\{1,2\min\{\widetilde p^+,\widetilde p^-\}\bigr\} \quad \text{and} \quad \widetilde p_*^\pm = \min\bigl\{1,2\min\{\widetilde p_*^+,\widetilde p_*^-\}\bigr\}  \]
In light of Theorem \ref{t:tilde}, $\widetilde p_*^\pm(\theta',\myvec{Z})$ is a valid p-value for testing the null hypothesis that $\theta=\theta'$.

Inverting these p-values gives the confidence sets
\begin{gather*} \widetilde C^\alpha(\myvec{z})  \eqd  \bigl\{\theta\in\mathbb{R}:\widetilde p^\pm(\theta,\myvec{z}) > \alpha\bigr\}  \\
\widetilde C_*^\alpha(\myvec{z})  \eqd  \bigl\{\theta\in\mathbb{R}:\widetilde p^\pm_*(\theta,\myvec{z}) > \alpha\bigr\}
\end{gather*}
For fixed $\myvec{z}$, the approximate p-values are well-behaved functions of $\theta$, so it is straightforward to numerically approximate the confidence sets (which will typically be intervals).

Table \ref{t:rasch} describes the results of a simulation experiment that investigated the coverage properties of $\widetilde C^\alpha$ and $\widetilde C^\alpha_*$.  In that experiment, we took $M=200$, $N=10$, and fixed the parameters $\kappa$, $\theta$, $\alpha$, $\beta$ and the covariates $v$ (see the Appendix for details).  We set $\theta=2$ and repeatedly (1000 times) generated datasets and importance samples in order to estimate the true coverage probabilities and the median interval lengths of various confidence intervals.  Confidence intervals based on the corrected p-values always maintain a coverage probability of at least $1-\alpha$ without becoming excessively long, while those based on uncorrected p-values can have much lower coverage probabilities.  As the number of importance samples increases, the two confidence intervals begin to agree.  

\begin{table}[h]
\caption{Performance of 95\% confidence intervals versus Monte Carlo sample size ($n$).}
\label{t:rasch}
\centering
\begin{tabular}{ c | c c c || c c c} 
\multicolumn{1}{c}{ } & \multicolumn{3}{c}{coverage prob (\%)} & \multicolumn{3}{c}{median length} \\
$n$ & $10$ & $50$ & $100$ & $10$ & $50$ & $100$ \\ \hline
$\widetilde p$ & 27.1 & 77.9 & 87.9 & 0.36 & 1.00 & 1.10 \\
$\widetilde p_*$ & 99.3 & 98.0 & 96.9 & 2.10 & 1.52 & 1.38 
\end{tabular}
\end{table}

Incidentally, importance sampling seems to be the only currently available method for quickly creating (even approximate) frequentist confidence intervals using conditional inference in this model for a dataset of this size.  Exact conditional logistic regression methods (e.g.,~\citet{stata} or \citet{logxact}) are unable to enumerate the state space or generate direct samples with a few gigabytes of memory.  And the corresponding MCMC methods (e.g.,~\citet{zamar2007elrm}) often fail to adequately sample the state space, even after many hours of computing.  Classical, unconditional, asymptotic confidence intervals also behave poorly.  For the simulation described in Table \ref{t:rasch}, 95\% Wald intervals only had 82.2\% coverage probability with a median length of 1.24.

\subsection{Accelerated testing for neurophysiological data} \label{s:jitter}

The example in this section is similar to the example in Section \ref{s:ex1}.  It is motivated by a practical multiple testing problem that arises in neurophysiology.  Neurophysiologists can routinely record the simultaneous electrical activity of over $100$ individual neurons in a behaving animal's brain.  Each neuron's spiking electrical activity can be modeled as a temporal point process.  Neurophysiologists are interested in detecting various types of spatio-temporal correlations between these point processes, an endeavor which often leads to challenging multiple testing problems.  For example, simply testing for non-zero correlation between all pairs of $100$ neurons gives $\binom{100}{2}$ different tests, and neurophysiologists are often interested in much more complicated situations than this.  This section demonstrates how to use importance sampling to accelerate a Monte Carlo hypothesis test for lagged correlation used in \citet{fujisawa2008behavior}, hereafter FAHB, and then to correctly control for multiple tests via the p-value correction.    

Let $X \eqd (\myvec{T},\myvec{N})$ be a marked point process where $\myvec{T} \eqd (T_1,\dotsc,T_M)$ are the nondecreasing event times taking values in $\{0,1,\dotsc,B-1\}$ and $\myvec{N} \eqd (N_1,\dotsc,N_M)$ are marks taking values in $\{1,\dotsc,C\}$.  Let $\myvec{T}^i \eqd (T^i_1,\dotsc,T^i_{M_i}) \eqd (T_k:N_k=i)$ be the sequence of event times with mark $i$, which in this context are the firing times (rounded to the nearest millisecond) of neuron $i$.  A neuron can fire at most once per ms, so the times in each $\myvec{T}^i$ are strictly increasing.  The dataset used here is described in FAHB and has $C=117$ neurons and was recorded over a period of $B = 2.812\times 10^6$ ms (about $47$ minutes) from a rat performing a working memory task in a simple maze.  There are $M=763501$ total events.  Fix $\Delta=10$ ms.  For each ordered pair of neurons $(i,j)$, $i\neq j$, FAHB were interested in testing the null hypothesis that the joint conditional distribution of $\myvec{T}^i$ and $\myvec{T}^j$ was uniform given $\lfloor{\myvec{T}^i/\Delta}\rfloor$ and $\lfloor{\myvec{T}^j/\Delta}\rfloor$, where $\lfloor{a}\rfloor$ is the greatest integer $\leq a$.  Note that $\lfloor{\myvec{T}^i/\Delta}\rfloor \eqd (\lfloor{T^i_1/\Delta}\rfloor,\dotsc,\lfloor{T^i_{M_i}/\Delta}\rfloor)$ gives a temporally coarsened version of neuron $i$'s firing times --- it only preserves the total number of firings in each length-$\Delta$ window of a fixed partition of time.  FAHB used two different test statistics $t^+(\myvec{T}^i,\myvec{T}^j)$ and $t^-(\myvec{T}^i,\myvec{T}^j)$ defined by
\[ \begin{aligned} t^+(\myvec{T}^i,\myvec{T}^j) & \eqd  \max_{d=1,\dotsc,4} \sum_{k=1}^{M_i}\sum_{\ell=1}^{M_j}\ind\{T_\ell^j-T_k^i=d\} \\
t^-(\myvec{T}^i,\myvec{T}^j) & \eqd  -\min_{d=1,\dotsc,4} \sum_{k=1}^{M_i}\sum_{\ell=1}^{M_j}\ind\{T_\ell^j-T_k^i=d\} \end{aligned} \]
each with power to detect different alternatives that they hoped to distinguish.  (There are some minor technical differences between the tests described here and those reported in FAHB.)  Loosely speaking, the tests were designed to detect a particular type of transient, fast-temporal, lagged-correlation between neurons (via the test statistic) while ignoring all types of slow-temporal correlations (via the conditioning).  Additional motivation, references and details can be found in FAHB.

Testing a specific null and test statistic combination is easy: it is trivial to sample from the null (i.e., conditionally uniform) distribution and to Monte Carlo approximate p-values for any test statistic.  Unfortunately, in this case there are $2C(C-1)=27144$ different hypothesis tests, which leads to a challenging multiple testing problem.  Much like the example in Section \ref{s:ex1}, impractically large Monte Carlo sample sizes are needed to successfully use direct sampling (i.e., $Q\equiv P$) in conjunction with a multiple testing correction such as Bonferroni.  FAHB did not directly address this particular multiple testing problem, but they did use overly conservative individual p-values in hopes of reducing the severity of the problem.  

Here we observe that importance sampling in conjunction with $\widehat p_*$ permits proper and practical use of multiple testing corrections.  The importance sampling proposal distributions that we use are mixture distributions much like the one in Section \ref{s:rasch} (see the Appendix for details).  We use a different proposal distribution for each hypothesis test depending on the choice of test statistic and on the neuron pair.  Using $n=100$ and rejecting whenever $\widehat p_* \leq 0.05/(2C(C-1))$ controls the family-wise error rate (FWER) at level $0.05$.

FAHB reported 78 rejections.  We can confirm about one third of them using this much more stringent FWER procedure.  The qualitative results of FAHB remain the same using this smaller set of rejections, reassuring us that the scientific conclusions of FAHB are not the result of statistical artifacts caused by improper control of multiple hypothesis tests.

\section{Discussion} \label{s:disc}

The practical benefits of using either $\widehat p_*$ or $\widetilde p_*$ are clear.  Valid p-values  are always crucial for multiple testing adjustments.  But even for individual tests, the p-value corrections protect against false rejections resulting from high (and perhaps undiagnosed) variability in the importance sampling approximations.  There is almost no computational penalty for using the corrections.  And there is little or no loss of power for well-behaved importance sampling algorithms.  These advantages extend to confidence intervals constructed by inverting the corrected p-values.  

Monte Carlo approximations should always be accompanied by further approximations of the Monte Carlo standard error.  Approximate standard errors give useful diagnostic information, but they can be very misleading.  The poor performance of $\widehat q$ in Section \ref{s:ex1} shows that the uncorrected p-value approximations cannot be meaningfully corrected by relying on approximate standard errors.  An interesting issue is whether or not the p-value corrections should also be accompanied by approximate standard errors.  Further research is warranted on this issue, but at this time, it seems sensible to report both the uncorrected p-value approximations with their approximate standard errors {\em and} the corrected p-values with no standard errors.  This provides useful information about estimating the target p-value and about Monte Carlo variability, but also permits interpretable hypothesis testing and multiple testing adjustments, without confusing the two issues.  One danger that must be avoided is the temptation to use close agreement between the corrected and uncorrected p-values as an indicator of convergence.  The Appendix contains additional examples that demonstrate how all four p-value approximations (and their standard errors) can be in close agreement but still far from the truth (toward which they are converging).

Although the p-value corrections are extremely simple, they are not always available using off-the-shelf importance sampling software, because the software may not permit evaluation of the importance weight of the original data.  Adding this capability is almost always trivial for the software designer --- instead of actually calling random number generators, simply see what values would lead to the original data and evaluate their probability --- but investigators may be unwilling or unable to modify existing software.  Hopefully, software designers will begin to include this capability in all importance sampling algorithms.  We have found that the generic ability to evaluate a proposal distribution at any point (including the observed data) is quite useful for diagnostic purposes, even if one does not expect to make use of the p-value corrections.

The techniques described here are examples of a more general principle of combining observations from both target and proposal distributions in order to improve Monte Carlo importance sampling approximations.  Suppose, for example, that our computational budget would allow a small sample from the target and a large sample from the proposal.  What is the best way to combine these samples for various types of inference?  As we have shown, even a single observation from the target can be used to provide much more than diagnostic information.  It can be used, surprisingly, to ensure that (often extremely poor) hypothesis test and confidence interval approximations maintain the nominal significance levels.  Perhaps a single observation from the target could also be used to improve approximations of point estimators in some way?

As importance sampling continues to gain prominence as a tool for Monte Carlo approximation, so do the innovations in importance sampling techniques.  Most of these will not fit into the classical importance sampling framework needed for the theorems here.  It remains future work to investigate the degree to which the original data can inform and correct more sophisticated sequential Monte Carlo methods.  But it seems clear that this valuable source of diagnostic information should not be completely ignored.

\appendix


\section{Proofs}

The proofs of both Theorem \ref{t:hat} and Theorem \ref{t:tilde} rely on a simple algebraic fact, encapsulated in the next lemma.
\begin{lem} \label{l} For all $t_0,\dotsc,t_n \in [-\infty,\infty]$ and all $\alpha, w_0,\dotsc,w_n \in [0,\infty]$, we have
\[ \sum_{k=0}^n w_k \ind\left\{\sum_{i=0}^n w_i \ind\{t_i \geq t_k\} \leq \alpha \right\} \leq \alpha \]
\end{lem}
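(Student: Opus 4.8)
The plan is to treat this as a purely combinatorial statement about the finite list of pairs $(t_k,w_k)$ and to exploit a monotonicity that is built into the inner sum. First I would abbreviate $S_k \eqd \sum_{i=0}^n w_i\ind\{t_i \geq t_k\}$, so that the quantity to be bounded is $\sum_{k:\,S_k\leq\alpha} w_k$, the total weight of the index set $K \eqd \{k: S_k \leq \alpha\}$. The case $K=\emptyset$ gives the bound $0\leq\alpha$ immediately, and the case $\alpha=\infty$ is trivial since then every index lies in $K$ and the sum is at most $\infty$; so I may assume $K\neq\emptyset$ and $\alpha<\infty$.

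The key observation is that $S_k$ is non-increasing as $t_k$ grows: if $t_{k'}\geq t_k$ then $\{i:t_i\geq t_{k'}\}\subseteq\{i:t_i\geq t_k\}$, and since all weights are nonnegative this gives $S_{k'}\leq S_k$ (valid even when some $w_i=\infty$, a subsum of $[0,\infty]$-valued terms being at most the full sum). Consequently $K$ is an upper level set of the $t$-values: $k\in K$ together with $t_{k'}\geq t_k$ forces $k'\in K$.

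I would then pick an index $k^*$ attaining the minimum of $t_k$ over the finite, nonempty set $K$. By minimality $K\subseteq\{k:t_k\geq t_{k^*}\}$, and by the monotonicity just noted the reverse inclusion holds as well, so $K=\{k:t_k\geq t_{k^*}\}$ exactly. The pay-off is that the total weight of this set is literally one of the inner sums:
\[ \sum_{k\in K} w_k = \sum_{k:\,t_k\geq t_{k^*}} w_k = S_{k^*} \leq \alpha, \]
where the final inequality holds because $k^*\in K$. That is the whole argument.

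There is no real obstacle so much as a need for care with the extended-real arithmetic: I would double-check that ties in the $t_k$ cause no trouble (equal $t$-values produce identical inner sums, so such indices enter or leave $K$ together), and that both the monotonicity step and the final identity $\sum_{k\in K} w_k = S_{k^*}$ remain correct when weights or $\alpha$ equal $\infty$ and when some $t_k$ equal $\pm\infty$. The conceptual crux is simply recognizing that the upper level set $K$ has total weight equal to the value $S_{k^*}$ sitting at its own boundary, which is $\leq\alpha$ by the very definition of $K$.
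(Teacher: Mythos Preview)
Your proof is correct and is essentially the same argument as the paper's: the paper reorders so that $t_0\geq\dotsb\geq t_n$, notes that $S_k$ is then increasing in $k$, takes $k^*$ to be the largest index with $S_k\leq\alpha$, and concludes $H=\sum_{k\leq k^*}w_k=S_{k^*}\leq\alpha$. Your version avoids the reordering by instead taking $k^*\in K$ with minimal $t_{k^*}$, which is the same idea expressed slightly more intrinsically; both identify $K$ as the upper level set $\{k:t_k\geq t_{k^*}\}$ and read off its total weight as $S_{k^*}$.
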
 
\begin{proof}
Let $H$ denote the left side of the desired inequality.  We can assume that $H > 0$, since the statement is trivial otherwise.  The pairs $(t_0,w_0),\dotsc,(t_n,w_n)$ can be reordered without affecting the value of $H$, so we can assume that $t_0 \geq \dotsb \geq t_n$. This implies that $\sum_{i=0}^n w_i\ind\{t_i\geq t_k\}$ is increasing in $k$, and that there exists a $k^*$ defined as the largest $k$ for which
\[\sum_{i=0}^n w_i \ind\{t_i \geq t_k\} \leq \alpha \]
So
\[ H = \sum_{k=0}^{k^*} w_k = \sum_{i=0}^{k^*} w_i\ind\{t_i \geq t_{k^*}\} \leq \alpha \qedhere \] 
\end{proof}

\subsection{Proof of Theorem \ref{t:hat}}

For any sequence $\myvec{y} \eqd (y_0,\dotsc,y_n)$ and any $k\in\{0,\dotsc,n\}$, let \[ \myvec{y^k} \eqd (y_k,y_1,\dotsc,y_{k-1},y_0,y_{k+1},\dotsc,y_n) \] which is the sequence obtained by swapping the $0$th element and the $k$th element in the original sequence $\myvec{y}$.  We begin with a lemma. 

\begin{lem} \label{l:hat} For any nonnegative, measurable function $f$, 
\[ \Exp\bigl(f(\myvec{Z})\bigr) = \Exp\left(\frac{1}{n+1}\sum_{k=0}^n w(Y_k)f(\myvec{Y^k})\right) \] 
\end{lem}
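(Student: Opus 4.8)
The plan is to prove the identity term by term. Specifically, I would show that for each fixed $k\in\{0,\dotsc,n\}$,
\[ \Exp\bigl(w(Y_k)\,f(\myvec{Y^k})\bigr) = \Exp\bigl(f(\myvec{Z})\bigr), \]
after which the lemma is immediate: the right-hand side of the statement is the average over $k$ of the quantities $\Exp(w(Y_k)f(\myvec{Y^k}))$, so by linearity this average equals $\Exp(f(\myvec{Z}))$. Nonnegativity of $f$ lets me invoke Tonelli freely to interchange the finite sum with the expectation and to justify the change-of-measure computation below without any integrability hypotheses.

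To establish the term-wise identity I would write the left-hand expectation as an integral of $w(y_k)\,f(\myvec{y^k})$ against the product measure $Q^{\otimes(n+1)}$ of $(Y_0,\dotsc,Y_n)$. In $\myvec{Y^k}=(Y_k,Y_1,\dotsc,Y_{k-1},Y_0,Y_{k+1},\dotsc,Y_n)$ the variable $Y_k$ occupies slot $0$ and carries the weight $w(Y_k)$, while the remaining $n$ variables $\{Y_i:i\neq k\}$ fill slots $1,\dotsc,n$. The crux is the importance-sampling change of measure applied only to the slot-$0$ variable: since $w$ is a version of $dP/dQ$ and $P\ll Q$, absorbing the factor $w(Y_k)$ into $dQ(Y_k)$ turns it into $dP(Y_k)$, thereby converting the slot-$0$ variable from $Q$ to $P$ while leaving the other $n$ variables distributed i.i.d.\ $Q$.

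The remaining step is pure bookkeeping. After the change of measure, slot $0$ is fed by a $P$-distributed variable and slots $1,\dotsc,n$ are fed by the $n$ dummy variables $\{Y_i:i\neq k\}$ integrated against the symmetric product measure $Q^{\otimes n}$. Renaming these dummy variables in order as $\tilde Y_1,\dotsc,\tilde Y_n$ — a legitimate move precisely because they are integrated against the identical measure $Q$ in a product — recovers exactly $\Exp\bigl(f(X,Y_1,\dotsc,Y_n)\bigr)=\Exp\bigl(f(\myvec{Z})\bigr)$, independent of $k$. The $k=0$ case is the cleanest sanity check: here $\myvec{Y^0}=(Y_0,\dotsc,Y_n)$ and the weight simply promotes $Y_0$ to an $X$.

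I expect the only real obstacle to be notational rather than conceptual: keeping careful track of which original variable lands in which slot of $f$ after the swap, and verifying that the relabeling of the $n$ i.i.d.\ $Q$-coordinates genuinely reproduces $\myvec{Z}$ for every $k$ — in particular that the stray variable $Y_0$, which never appears in $\myvec{Z}$ itself, is correctly reabsorbed as one of the $Q$-coordinates. It is worth emphasizing that this lemma uses neither the permutation invariance of $t$ nor any structure of $f$ beyond measurability and nonnegativity; the permutation-invariance hypothesis enters only later, when Lemma \ref{l} and this lemma are combined to prove Theorem \ref{t:hat}.
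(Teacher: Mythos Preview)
Your proposal is correct and follows essentially the same approach as the paper. The paper first establishes the $k=0$ case via the change of measure $\Exp(f(\myvec{Z}))=\Exp(w(Y_0)f(\myvec{Y}))$ and then invokes exchangeability of $\myvec{Y}$ to obtain $\Exp(w(Y_0)f(\myvec{Y}))=\Exp(w(Y_k)f(\myvec{Y^k}))$ for every $k$; you instead handle each $k$ directly by combining the change of measure on $Y_k$ with a relabeling of the remaining i.i.d.\ $Q$-coordinates, which is the same two ingredients in a slightly different order.
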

\begin{proof}
Recall that $Z_i=Y_i\sim Q$ for $i\neq 0$ and that $Z_0=X\sim P$.  A change of variables from $X$ to $Y_0$ gives
\[ \Exp\bigl(f(\myvec{Z})\bigr) = \Exp\bigl(w(Y_0)f(\myvec{Y})\bigr) \]
Since the distribution of $\myvec{Y}$ is invariant to permutations, we have
\[ \Exp\bigl(w(Y_0)f(\myvec{Y})\bigr) = \Exp\bigl(w(Y_k)f(\myvec{Y^k})\bigr) \]
for each $k=0,\dotsc,n$, which means that
\[ \Exp\bigl(f(\myvec{Z})\bigr) = \Exp\bigl(w(Y_0)f(\myvec{Y})\bigr) = \frac{1}{n+1}\sum_{k=0}^n \Exp\bigl(w(Y_k)f(\myvec{Y^k})\bigr) \]
Moving the sum inside the expectation completes the proof.
\end{proof}

Applying Lemma \ref{l:hat} to the function $f(\myvec{z}) = \ind\bigl\{\widehat p_*(\myvec{z}) \leq \alpha\bigr\}$ gives
\begin{align*}
& \Prob\bigl(\widehat p_*(\myvec{Z})\leq\alpha\bigr) = \Exp\bigl(\ind\bigl\{\widehat p_*(\myvec{Z}) \leq \alpha\bigr\}\bigr) = \Exp\left(\frac{1}{n+1}\sum_{k=0}^n w(Y_k)\ind\bigl\{\widehat p_*(\myvec{Y^k}) \leq \alpha\bigr\} \right) 
\\ & \quad = \Exp\left(\sum_{k=0}^n \frac{w(Y_k)}{n+1}\ind\left\{\sum_{i=0}^n \frac{w(Y_i)}{n+1}\ind\bigl\{t(Y_i,\myvec{Y}) \geq t(Y_k,\myvec{Y})\bigr\}  \leq \alpha\right\} \right) 
\end{align*}
The quantity inside the final expectation is always $\leq \alpha$, which follows from Lemma \ref{l} by taking $t_\ell  \eqd  t(Y_\ell,\myvec{Y})$ and $w_\ell  \eqd  w(Y_\ell)/(n+1)$ for each $\ell=0,\dotsc,n$.  This completes the proof of Theorem \ref{t:hat}.

\subsection{Proof of Theorem \ref{t:tilde}}

Let $\myvec{\Pi}$ denote a random permutation chosen uniformly from $\mathcal{M}$ and chosen independently of $\myvec{Z}$.  Let $\myvec{U} \eqd \myvec{Z^{(\myvec{\pi})}}$.  If ${\myvec{\pi}}\in\mathcal{M}$ is a fixed permutation, then $\myvec{\Pi}$ and $\myvec{\Pi^{(\myvec{\pi})}}$ have the same distribution, which means $\myvec{U}$ and $\myvec{U^{(\myvec{\pi})}}$ have the same distribution.  In particular, $\myvec{U}$ and $\myvec{U^k}$ have the same distribution, where the notation $\myvec{U^k}$ is defined in the proof of Theorem \ref{t:hat}.  We begin with two lemmas, and then the proof of Theorem \ref{t:tilde} is nearly identical to that of Theorem \ref{t:hat}.  We use the convention $0/0 \eqd 0$.

\begin{lem} \label{l:tilde0} Let $P_{\myvec{Z}}$ and $P_{\myvec{U}}$ denote the distributions of $\myvec{Z}$ and $\myvec{U}$, respectively, over $S^{n+1}$.  Then $P_{\myvec{Z}} \ll P_{\myvec{U}}$ and
\[ \frac{dP_{\myvec{Z}}}{dP_{\myvec{U}}}(\myvec{u}) = \frac{(n+1)w(u_0)}{\sum_{j=0}^n w(u_j)} \]
almost surely.
\end{lem}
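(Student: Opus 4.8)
The plan is to introduce a common dominating measure, compute the Radon-Nikodym derivative of each of $P_{\myvec{Z}}$ and $P_{\myvec{U}}$ against it, and then form the ratio via the chain rule. The natural dominating measure is the law $P_{\myvec{Y}}$ of $\myvec{Y} \eqd (Y_0,Y_1,\dotsc,Y_n)$, an i.i.d.\ sample from $Q$; note $P_{\myvec{Y}} = Q^{n+1}$ is invariant under permutations of its coordinates, and this symmetry is what makes the whole argument work, since it lets me move permutations freely inside expectations.

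For the first derivative I would simply recall from the proof of Lemma \ref{l:hat} the change of variables $\Exp\bigl(f(\myvec{Z})\bigr) = \Exp\bigl(w(Y_0)f(\myvec{Y})\bigr)$, which is exactly the statement that $dP_{\myvec{Z}}/dP_{\myvec{Y}}(\myvec{y}) = w(y_0)$ almost surely. For the second derivative I would symmetrize. For any bounded measurable $g$, conditioning on the independent uniform permutation gives $\Exp\bigl(g(\myvec{U})\bigr) = \frac{1}{(n+1)!}\sum_{\myvec{\pi}\in\mathcal{M}} \Exp\bigl(g(\myvec{Z^{(\myvec{\pi})}})\bigr)$. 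Applying the density $w(y_0)$ to the map $\myvec{y}\mapsto g(\myvec{y^{(\myvec{\pi})}})$ and then changing variables $\myvec{y}\mapsto\myvec{y^{(\myvec{\pi})}}$ (legitimate precisely because $P_{\myvec{Y}}$ is permutation invariant) carries the factor $w(Y_0)$ to $w(V_{\pi^{-1}(0)})$ evaluated at the new variable $\myvec{V}\sim P_{\myvec{Y}}$. Summing over $\myvec{\pi}\in\mathcal{M}$ and using that exactly $n!$ permutations satisfy $\pi^{-1}(0)=j$ for each fixed $j$ collapses the average to $\frac{1}{n+1}\sum_{j=0}^n w(y_j)$, so that $dP_{\myvec{U}}/dP_{\myvec{Y}}(\myvec{y}) = \frac{1}{n+1}\sum_{j=0}^n w(y_j)$ almost surely.

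Finally I would take the ratio. On the set where $\sum_{j=0}^n w(u_j)>0$ the chain rule for Radon-Nikodym derivatives yields $\frac{dP_{\myvec{Z}}}{dP_{\myvec{U}}}(\myvec{u}) = \frac{(n+1)w(u_0)}{\sum_{j=0}^n w(u_j)}$, matching the claim. Absolute continuity $P_{\myvec{Z}}\ll P_{\myvec{U}}$ follows from nonnegativity of $w$: the containment $\{\sum_j w(u_j)=0\}\subseteq\{w(u_0)=0\}$ shows that the $P_{\myvec{U}}$-null set on which the denominator vanishes is also $P_{\myvec{Z}}$-null, and the convention $0/0\eqd 0$ disposes of that set. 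The step I expect to be most delicate is the symmetrization change of variables: one must track the inverse permutation correctly (the permutation acts on coordinate positions, but $w$ singles out the index of $Y_0$) and invoke exchangeability of $\myvec{Y}$ at the right moment, after which the combinatorial count of permutations with $\pi^{-1}(0)=j$ is routine.
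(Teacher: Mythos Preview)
Your proposal is correct and is essentially the paper's own argument: the key computation---symmetrizing $\Exp\bigl(g(\myvec{U})\bigr)$ over permutations, changing variables in each term via the exchangeability of $\myvec{Y}$, and collapsing the sum using the count $|\{\myvec{\pi}:\pi^{-1}_0=j\}|=n!$---is exactly what the paper does in equation~\eqref{e:tilde0_2}. The only cosmetic difference is that you phrase the conclusion as a ratio of two densities against the common dominating measure $Q^{n+1}$ and invoke the chain rule, whereas the paper plugs a specific $f$ into its formula for $\Exp\bigl(f(\myvec{U})\bigr)$ to verify the defining integral identity directly; your explicit handling of the $\{\sum_j w(u_j)=0\}$ set is a small bonus the paper leaves implicit.
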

\begin{proof}
Let $g$ be any nonnegative, measurable function on $S^{n+1}$.  We need only show that
\begin{equation} \label{e:l:tilde0} \Exp\bigl(g(\myvec{Z})\bigr) = \Exp\left(\frac{(n+1)w(U_0)}{\sum_{j=0}^n w(U_j)}g(\myvec{U})\right) \end{equation}
For any $\myvec{\pi}\in\mathcal{M}$ there is a unique inverse permutation $\myvec{\pi^{-1}}\in\mathcal{M}$ with $\pi^{-1}_{\pi_i}=\pi_{\pi^{-1}_i}=i$, for each $i=0,\dotsc,n$.  Comparing with the proof of Lemma \ref{l:hat}, for any nonnegative, measurable function $f$ we have
\begin{equation} \label{e:tilde0_1}
\Exp\bigl(f(\myvec{Z^{(\myvec{\pi})}})\bigr) = \Exp\bigl(f(\myvec{Y^{(\myvec{\pi})}})w(Y_0)\bigr) = \Exp\bigl(f(\myvec{Y})w(Y_{\pi^{-1}_0})\bigr)
\end{equation}
where the first equality is a change of variables from $\myvec{Z}$ to $\myvec{Y}$ and the second equality follows from the fact that the distribution of $\myvec{Y}$ is permutation invariant, so, in particular, $\myvec{Y}$ and $\myvec{Y^{(\myvec{\pi^{-1}})}}$ have the same distribution.  Using \eqref{e:tilde0_1} gives
\begin{align}
\notag & \Exp\bigl(f(\myvec{U})\bigr) = \frac{1}{(n+1)!}\sum_{\myvec{\pi}\in\mathcal{M}}\Exp\bigl(f(\myvec{U})\bigl|\myvec{\Pi}=\myvec{\pi}\bigr) = \frac{1}{(n+1)!}\sum_{\myvec{\pi}\in\mathcal{M}}\Exp\bigl(f(\myvec{Z^{(\myvec{\pi})}})\bigr)
\\ \notag & \quad = \frac{1}{(n+1)!}\sum_{\myvec{\pi}\in\mathcal{M}}\Exp\bigl(f(\myvec{Y})w(Y_{\pi^{-1}_0})\bigr) = \frac{1}{(n+1)!}\sum_{j=0}^n \sum_{\substack{\myvec{\pi}\in\mathcal{M} :\\\pi^{-1}_0=j}}\Exp\bigl(f(\myvec{Y})w(Y_j)\bigr)
\\ & \label{e:tilde0_2} \quad = \frac{1}{(n+1)!}\sum_{j=0}^n n! \Exp\bigl(f(\myvec{Y})w(Y_j)\bigr) = \Exp\left(\frac{\sum_{j=0}^n w(Y_j)}{n+1}f(\myvec{Y}) \right) 
\end{align}
Applying \eqref{e:tilde0_2} to the function $f(\myvec{u})=(n+1)g(\myvec{u})w(u_0)/\bigl(\sum_{j=0}^n w(u_j)\bigr)$ gives
\begin{align}
\notag & \Exp\left(\frac{(n+1)w(U_0)}{\sum_{j=0}^n w(U_j)}g(\myvec{U})\right) =  \Exp\left(\frac{\sum_{j=0}^n w(Y_j)}{n+1}\frac{(n+1)w(Y_0)}{\sum_{j=0}^n w(Y_j)}g(\myvec{Y})\right)
\\ \notag & \quad = \Exp\bigl(w(Y_0)g(\myvec{Y})\bigr) = \Exp\bigl(g(\myvec{Z})\bigr)
\end{align}
where the last equality is a change of variables as in \eqref{e:tilde0_1}.  This gives \eqref{e:l:tilde0} and completes the proof.
\end{proof}

\begin{lem} \label{l:tilde} For any nonnegative, measurable function $f$,
\[ \Exp\bigl(f(\myvec{Z})\bigr) = \Exp\left(\sum_{k=0}^n \frac{w(U_k)}{\sum_{j=0}^n w(U_j)}f(\myvec{U^k})\right) \] 
\end{lem}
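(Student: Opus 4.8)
The plan is to combine Lemma \ref{l:tilde0}, which already rewrites $\Exp\bigl(f(\myvec{Z})\bigr)$ as an expectation against $\myvec{U}$, with the same swap-symmetrization used in the proof of Lemma \ref{l:hat}. First I would apply Lemma \ref{l:tilde0} to the nonnegative function $f$ to get
\[ \Exp\bigl(f(\myvec{Z})\bigr) = \Exp\left(\frac{(n+1)w(U_0)}{\sum_{j=0}^n w(U_j)}f(\myvec{U})\right), \]
and then package the integrand as $g(\myvec{u}) \eqd \frac{(n+1)w(u_0)}{\sum_{j=0}^n w(u_j)}f(\myvec{u})$ (with the convention $0/0\eqd 0$), so that the right-hand side is just $\Exp\bigl(g(\myvec{U})\bigr)$.

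The crucial step is to average $g$ over all swaps. Recall from the discussion preceding Lemma \ref{l:tilde0} that $\myvec{U}$ and $\myvec{U^k}$ have the same distribution for every $k\in\{0,\dotsc,n\}$; hence $\Exp\bigl(g(\myvec{U})\bigr) = \Exp\bigl(g(\myvec{U^k})\bigr)$ for each $k$, and averaging over $k$ gives
\[ \Exp\bigl(g(\myvec{U})\bigr) = \frac{1}{n+1}\sum_{k=0}^n \Exp\bigl(g(\myvec{U^k})\bigr). \]
Next I would compute $g(\myvec{U^k})$ explicitly. Since $\myvec{U^k}$ only interchanges the $0$th and $k$th coordinates of $\myvec{U}$, the denominator $\sum_{j=0}^n w\bigl((\myvec{U^k})_j\bigr)$ is unchanged and equals $\sum_{j=0}^n w(U_j)$, whereas the leading factor $w(u_0)$ is now evaluated at the new $0$th coordinate $U_k$. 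Thus $g(\myvec{U^k}) = \frac{(n+1)w(U_k)}{\sum_{j=0}^n w(U_j)}f(\myvec{U^k})$.

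Substituting this into the averaged identity cancels the factor $n+1$, and moving the finite sum inside the expectation (legitimate by nonnegativity) yields
\[ \Exp\bigl(f(\myvec{Z})\bigr) = \sum_{k=0}^n \Exp\left(\frac{w(U_k)}{\sum_{j=0}^n w(U_j)}f(\myvec{U^k})\right) = \Exp\left(\sum_{k=0}^n \frac{w(U_k)}{\sum_{j=0}^n w(U_j)}f(\myvec{U^k})\right), \]
which is the claim. The only points needing care are the invariance of the denominator under the swap and the bookkeeping of the $0/0$ convention on the degenerate event $\{\sum_j w(U_j)=0\}$, where every summand vanishes and the convention is harmless. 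I expect essentially all the content to sit in the distributional identity $\myvec{U}\stackrel{d}{=}\myvec{U^k}$, so once Lemma \ref{l:tilde0} is granted there is no genuine obstacle, only careful substitution and the observation that swapping two coordinates leaves the total weight untouched.
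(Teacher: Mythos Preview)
Your proof is correct and follows essentially the same route as the paper: apply Lemma \ref{l:tilde0} to pass from $\myvec{Z}$ to $\myvec{U}$, use the distributional identity $\myvec{U}\stackrel{d}{=}\myvec{U^k}$ to symmetrize over swaps, and average over $k$. The only difference is cosmetic---you package the integrand as a single function $g$ and are slightly more explicit about why the denominator survives the swap---but the argument is the same.
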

\begin{proof}
Changing variables from $\myvec{Z}$ to $\myvec{U}$ and using Lemma \ref{l:tilde0} gives 
\[ \Exp\bigl(f(\myvec{Z})\bigr) = \Exp\left(\frac{(n+1)w(U_0)}{\sum_{j=0}^n w(U_j)} f(\myvec{U})\right) \]
Since the distribution of $\myvec{U}$ is invariant to permutations, we have
\[ \Exp\left(\frac{(n+1)w(U_0)}{\sum_{j=0}^n w(U_j)} f(\myvec{U})\right) = \Exp\left(\frac{(n+1)w(U_k)}{\sum_{j=0}^n w(U_j)} f(\myvec{U^k})\right) \]
for each $k=0,\dotsc,n$, which means that
\[ \Exp\bigl(f(\myvec{Z})\bigr) = \Exp\left(\frac{(n+1)w(U_0)}{\sum_{j=0}^n w(U_j)} f(\myvec{U})\right) = \frac{1}{n+1}\sum_{k=0}^n \Exp\left(\frac{(n+1)w(U_k)}{\sum_{j=0}^n w(U_j)} f(\myvec{U^k})\right) \]
Moving the sum inside the expectation and cancelling the $(n+1)$'s completes the proof.
\end{proof}

Applying Lemma \ref{l:tilde} to the function $f(\myvec{z}) = \ind\bigl\{\widetilde p_*(\myvec{z}) \leq \alpha\bigr\}$ gives
\begin{align*}
& \Prob\bigl(\widetilde p_*(\myvec{Z})\leq\alpha\bigr) = \Exp\bigl(\ind\bigl\{\widetilde p_*(\myvec{Z}) \leq \alpha\bigr\}\bigr) = \Exp\left(\sum_{k=0}^n \frac{w(U_k)}{\sum_{j=0}^n w(U_j)} \ind\bigl\{\widetilde p_*(\myvec{U^k}) \leq \alpha\bigr\} \right) 
\\ & \quad = \Exp\left(\sum_{k=0}^n \frac{w(U_k)}{\sum_{j=0}^n w(U_j)}\ind\left\{\sum_{i=0}^n \frac{w(U_i)}{\sum_{j=0}^n w(U_j)}\ind\bigl\{t(U_i,\myvec{U}) \geq t(U_k,\myvec{U})\bigr\}  \leq \alpha\right\} \right) 
\end{align*}
The quantity inside the final expectation is always $\leq \alpha$, which follows from Lemma \ref{l} by taking $t_\ell  \eqd  t(U_\ell,\myvec{U})$ and $w_\ell  \eqd  w(U_\ell)/\bigl(\sum_{j=0}^n w(U_j)\bigr)$ for each $\ell=0,\dotsc,n$.  This completes the proof of Theorem \ref{t:tilde}.

\section{Proposal distributions and simulation details}

The simulation example in Section \ref{s:ex1} uses conditional inference, so the target and proposal distributions for each dataset $i$ (notation suppressed) can depend on the observed values $\myvec{V} \eqd (V_1,\dotsc,V_m)$ and the fact that there are $r$ one-labels and $m-r$ zero-labels (but cannot depend on the observed pairing of labels and values). 
Let $\myvec{I} \eqd (I_1,\dotsc,I_m)$ be a permutation that makes $\myvec{V}$ non-increasing, i.e., $V_{I_1}\geq\dotsb\geq V_{I_m}$.  Choose a random permutation $\myvec{\Pi} \eqd (\Pi_1,\dotsc,\Pi_m)$ according to the distribution  
\[ \Prob(\myvec{\Pi}=\myvec{\pi})  \eqd  \frac{\exp\bigr(\theta\sum_{i=1}^r \ind\{\pi_i \leq r\}\bigl)}{r!(m-r)!\sum_{k=0}^m \binom{r}{k}\binom{m-r}{r-k} \exp(\theta k)} \quad (\theta\in\mathbb{R}, r=0,\dotsc,m) \]  
where the binomial coefficients $\binom{a}{b} \eqd a!/(b!(a-b)!)$ are defined to be zero if $a < 0$, $b < 0$, or $a<b$.  Leaving $\myvec{V}$ in the original observed order and permuting $\myvec{L}$ so that $L_{I_{\Pi_1}}=\dotsb=L_{I_{\Pi_r}}=1$ and $L_{I_{\Pi_{r+1}}}=\dotsb=L_{I_{\Pi_m}}=0$ gives a random pairing of values with labels.  The case $\theta=0$ is the uniform distribution over permutations, i.e., the target null conditional distribution.  We used $\theta=3$ for the proposal distribution which assigns higher probability to those permutations that tend to match the label one with larger values.

The simulation example in Section \ref{s:jitter} also uses conditional inference, so the target and proposal distributions for each pair of neurons $(i,j)$ can depend on the temporally coarsened event times $\lfloor{\myvec{T}^i/\Delta}\rfloor$ and $\lfloor{\myvec{T}^j/\Delta}\rfloor$.  Define the set of all event times with the same temporal coarsening as neuron $i$ to be
\[ \Omega_i  \eqd  \bigl\{\myvec{u}\in\{0,\dotsc,B-1\}^{M_i}:u_1<\dotsb<u_{M_i},\ \lfloor{u_k/\Delta}\rfloor=\lfloor{T_k^i/\Delta}\rfloor, \forall k\bigr\} \]
For any $\myvec{s}\in\mathbb{Z}^m$ and $a\in\mathbb{Z}$ define
\[ R_a(\myvec{s})  \eqd  \bigl|\{k:\lfloor{s_k/\Delta}\rfloor=a\}\bigr| \] so that, for example,
\[ \bigl|\Omega_i\bigr| = \prod_a \binom{\Delta}{R_a(\myvec{T}^i)} \]
The target distribution (i.e., the null hypothesis for pair $(i,j)$) is the uniform distribution over $\Omega_i\times\Omega_j$.  For any $\myvec{s}\in\Omega_j$, $\myvec{u}\in\Omega_i$, $d\in\mathbb{Z}$, and $\theta\in\mathbb{R}$ define
\[ \rho(\myvec{s},\myvec{u},d,\theta)  \eqd  \frac{\exp\bigl(\theta\sum_{k=1}^{M_j}\sum_{\ell=1}^{M_i}\ind\{s_k=u_\ell+d\}\bigr)}{\prod_a \sum_{r=0}^\Delta \binom{R_a(\myvec{u}+d)}{r}\binom{\Delta-R_a(\myvec{u}+d)}{R_a(\myvec{s})-r}\exp(\theta r) }\ind\{\myvec{s}\in\Omega_j\} \]
which is a probability distribution over $\Omega_j$ for fixed $\myvec{u}$, $d$, and $\theta$.  The case $\theta=0$ is the uniform distribution over $\Omega_j$; the case $\theta > 0$ prefers those $\myvec{s}$ with event times that match those in $\myvec{u}+d$; and the case $\theta < 0$ prefers those $\myvec{s}$ that do not.  We used proposal distributions of the form
\[ \Prob\bigl(\myvec{T}^i=\myvec{u},\myvec{T}^j=\myvec{s}\bigr)  \eqd  \frac{1}{5}\sum_{d=0}^4 \frac{\ind\{\myvec{u}\in\Omega_i\}}{\bigl|\Omega_i\bigr|}\rho(\myvec{s},\myvec{u},d,\theta_d) \]
which choose event times uniformly for neuron $i$ and then choose times in neuron $j$ that prefer (or avoid) times of a specific lag from those of neuron $i$.  For the test statistic $t^+$ we used $(\theta_0,\dotsc,\theta_4) \eqd (0,.5,.5,.5,.5)$ and for $t^-$ we used $(0,-.5,-.5,-.5,-.5)$.


All simulations and analysis were performed with custom software written in Matlab 2010b and executed on a 2.66 GHz iMac with 8GB of RAM using Matlab's default pseudo random number generator (Mersenne Twister).  The parameters used for the logistic regression example in Section \ref{s:rasch} are $\kappa=-1.628$, $\myvec{\beta}=($0.210, -0.066,  0.576, -0.197,  0.231,  0.184, -0.034, -0.279, -0.396,  0.000$)$ and $\myvec{\alpha}=($-0.183, -0.735, -0.144, -0.756, -0.749, -0.226, -0.538, -0.213, -0.118, -0.284, -0.127, -0.632, -0.132,  0.104,  0.000, -0.781, -0.500, -0.498, -0.182, -0.269, -0.077, -0.499, -0.661, -0.780, -0.095, -0.661, -0.478, -0.315, -0.638, -0.225, -0.382, -0.715, -0.085, -0.766, -0.573, -0.629, -0.336, -0.775, -0.461, -0.762, -0.754, -0.082, -0.575, -0.263,  0.098, -0.434, -0.172, -0.109, -0.434, -0.211, -0.757,  0.067, -0.679, -0.601, -0.069, -0.379, -0.098, -0.471, -0.594, -0.830, -0.193, -0.437, -0.415, -0.257, -0.807, -0.551, -0.094, -0.170, -0.741, -0.737, -0.774, -0.859, -0.444, -0.211, -0.144, -0.336, -0.758, -0.235, -0.740, -0.732, -0.768, -0.725, -0.698, -0.671, -0.549, -0.550, -0.649, -0.616,  0.026, -0.164, -0.311, -0.682, -0.655, -0.789,  0.047, -0.160, -0.309, -0.553, -0.701, -0.244,  0.121, -0.696, -0.609, -0.470, -0.793, -0.183, -0.464,  0.116, -0.465, -0.246, -0.712, -0.485, -0.706, -0.109,  0.004, -0.516, -0.181, -0.573, -0.336, -0.034, -0.269, -0.531, -0.568, -0.414, -0.444, -0.507, -0.308, -0.124, -0.442, -0.437, -0.742, -0.842, -0.577, -0.549, -0.213,  0.090,  0.069, -0.409, -0.626, -0.103, -0.107, -0.126, -0.123, -0.761, -0.185, -0.403, -0.655, -0.768, -0.043, -0.692, -0.703, -0.201,  0.028, -0.350, -0.164, -0.713,  0.087, -0.326, -0.187, -0.830, -0.058, -0.118, -0.747, -0.342, -0.541, -0.320, -0.468, -0.452, -0.686, -0.611, -0.846,  0.057, -0.213,  0.066, -0.703,  0.054, -0.072, -0.289, -0.427, -0.609, -0.115, -0.638, -0.803, -0.099, -0.196, -0.152, -0.225, -0.448, -0.476, -0.051, -0.549, -0.052, -0.078, -0.014, -0.361, -0.231,  0.084, -0.423, -0.807,  0.000$)$.  The final entries of $\myvec{\beta}$ and $\myvec{\alpha}$ were constrained to be zero to avoid identifiability problems.  The covariates $\myvec{\nu}$ are recorded as a $200\times 10$ matrix in the ASCII file {\tt nu.csv}.

\section{Additional examples}

\subsection{Mismatched Gaussians} \label{s:G}

For the first example, we take $P$ to be Normal$(0,1)$, $Q$ to be Normal$(\mu,\sigma)$, and use the test statistic $t(x) \eqd x$.  We experiment with a few choices of $\mu$, $\sigma$ and $n$, and in each case compute the cumulative distribution functions (cdfs) and mean-squared errors (MSEs) of the approximations $\widehat p$, $\widetilde p$, $\widehat p_*$, and $\widetilde p_*$ of the true p-value $p(X) = 1-\Phi(X)$, where $\Phi$ is the standard normal cdf.  
Actually, we do not compute these quantities directly, but approximate them using $10^6$ Monte Carlo samples (each of which uses a new $X$ and new importance samples $Y_1,\dotsc,Y_n$).   
Figure \ref{f:1} and Table \ref{t:1} show the results.  
In each case, $\widehat p_*$ and $\widetilde p_*$ seem to be valid p-values, confirming the theory.  There are many cases where $\widehat p$ and $\widetilde p$ are not valid.  

\begin{figure}[h] 
\centering
	\epsfig{file=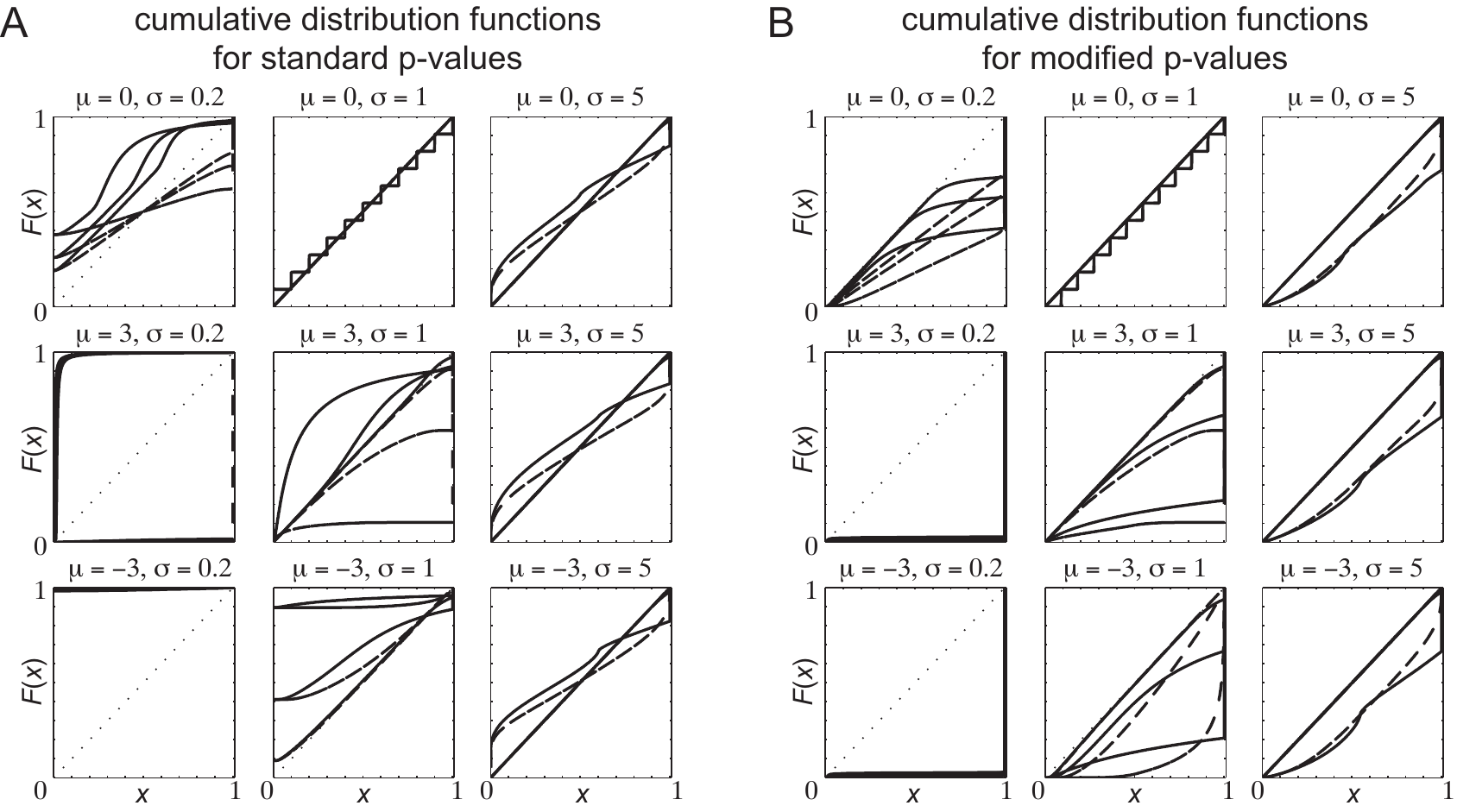,width=\textwidth}
\caption{Estimated cumulative distribution functions (cdfs) for the approximate p-values in Example \ref{s:G} under the null hypothesis.  Each plot corresponds to a specific choice of $\mu$ and $\sigma$ for the proposal distribution.  The plots in panel A show the cdfs for $\widehat p$ (solid lines) and $\widetilde p$ (dashed lines) for each of $n=10, 1000, 100000$, for a total of six cdfs on each graph.  The plots in panel B show $\widehat p_*$ (solid lines) and $\widetilde p_*$ (dashed lines) for each of $n=10, 1000, 100000$.  (Each cdf was estimated using $10^6$ Monte Carlo experiments, each of which involved a new $X$ and new $Y_1,\dotsc,Y_n$.  Since cdfs 
take values in $[0,1]$, the maximum possible variance of any of these estimators is $(1/4)10^{-6}$, which implies that an interval of $\pm 10^{-3}$ around any estimate is at least a 95\% confidence interval.)  The cdf of the target p-value, $p$, is the cdf of a Uniform$(0,1)$ random variable, namely, $F(x)=x$, and is shown with a dotted line along the diagonal of each plot.  The different choices of $n$ are not labeled on the plots, but where the curves are distinguishable, larger $n$ will always be closer to the diagonal, since the approximate p-values always converge to the target p-value.  A {\em valid} p-value has a cdf that lies on or below the diagonal.  If the cdf drops strictly below the diagonal, this indicates a loss of power.  If the cdf exceeds the diagonal, this indicates that the resulting hypothesis test does not correctly control the type I error.  Note that the corrected  p-values are always valid, but that the original p-values are often invalid. \label{f:1}} 
\end{figure}

\begin{table}[h]
\caption{Estimated MSEs of the approximate p-values (smallest for each row is bold)}
\label{t:1}
\centering
\begin{tabular}{c c | c || c c c c}
\multicolumn{3}{l}{\hspace{4ex}$Q$} & \multicolumn{4}{c}{MSE} \\ 
$\mu$ & $\sigma$ & $n$ & $\widehat p$ & $\widehat p_*$ & $\widetilde p$ & $\widetilde p_*$ \\
[0.5ex] \hline \hline
$0$ & $0.2$ & $10$ & $1.16e-01$ & $2.54e-01$ & $\bs{5.12e-02}$ & $2.37e-01$ \\
$0$ & $0.2$ & $1000$ & $5.98e-02$ & $1.98e-01$ & $\bs{2.43e-02}$ & $1.75e-01$ \\
$0$ & $0.2$ & $100000$ & $3.39e-02$ & $1.52e-01$ & $\bs{1.41e-02}$ & $1.33e-01$ \\ \hline
$0$ & $1$ & $10$ & $1.67e-02$ & $\bs{1.66e-02}$ & $1.67e-02$ & $\bs{1.66e-02}$ \\
$0$ & $1$ & $1000$ & $\bs{1.67e-04}$ & $1.67e-04$ & $\bs{1.67e-04}$ & $1.67e-04$ \\
$0$ & $1$ & $100000$ & $\bs{1.66e-06}$ & $1.66e-06$ & $\bs{1.66e-06}$ & $1.66e-06$ \\ \hline
$0$ & $5$ & $10$ & $8.58e-02$ & $9.52e-02$ & $6.80e-02$ & $\bs{6.37e-02}$ \\
$0$ & $5$ & $1000$ & $1.38e-03$ & $1.39e-03$ & $\bs{4.96e-04}$ & $4.96e-04$ \\
$0$ & $5$ & $100000$ & $1.45e-05$ & $1.45e-05$ & $\bs{4.94e-06}$ & $4.94e-06$ \\ \hline
$3$ & $0.2$ & $10$ & $\bs{3.23e-01}$ & $3.23e-01$ & $3.31e-01$ & $3.31e-01$ \\
$3$ & $0.2$ & $1000$ & $3.17e-01$ & $\bs{3.15e-01}$ & $3.26e-01$ & $3.26e-01$ \\
$3$ & $0.2$ & $100000$ & $3.08e-01$ & $\bs{3.07e-01}$ & $3.20e-01$ & $3.20e-01$ \\ \hline
$3$ & $1$ & $10$ & $1.94e-01$ & $\bs{1.83e-01}$ & $2.53e-01$ & $2.56e-01$ \\
$3$ & $1$ & $1000$ & $2.48e-02$ & $\bs{2.04e-02}$ & $5.18e-02$ & $5.06e-02$ \\
$3$ & $1$ & $100000$ & $7.69e-04$ & $\bs{6.06e-04}$ & $4.76e-03$ & $4.65e-03$ \\ \hline
$3$ & $5$ & $10$ & $9.62e-02$ & $1.05e-01$ & $8.63e-02$ & $\bs{7.92e-02}$ \\
$3$ & $5$ & $1000$ & $1.62e-03$ & $1.63e-03$ & $\bs{5.97e-04}$ & $5.98e-04$ \\
$3$ & $5$ & $100000$ & $1.72e-05$ & $1.72e-05$ & $5.93e-06$ & $\bs{5.93e-06}$ \\ \hline
$-3$ & $0.2$ & $10$ & $3.34e-01$ & $3.42e-01$ & $\bs{3.30e-01}$ & $3.34e-01$ \\
$-3$ & $0.2$ & $1000$ & $3.33e-01$ & $3.49e-01$ & $\bs{3.26e-01}$ & $3.36e-01$ \\
$-3$ & $0.2$ & $100000$ & $3.32e-01$ & $3.57e-01$ & $\bs{3.20e-01}$ & $3.39e-01$ \\ \hline
$-3$ & $1$ & $10$ & $2.80e-01$ & $3.99e-01$ & $\bs{2.53e-01}$ & $3.20e-01$ \\
$-3$ & $1$ & $1000$ & $1.05e-01$ & $2.68e-01$ & $\bs{5.17e-02}$ & $1.44e-01$ \\
$-3$ & $1$ & $100000$ & $1.62e-02$ & $3.22e-02$ & $\bs{4.79e-03}$ & $1.21e-02$ \\ \hline
$-3$ & $5$ & $10$ & $1.06e-01$ & $1.23e-01$ & $8.60e-02$ & $\bs{7.82e-02}$ \\
$-3$ & $5$ & $1000$ & $1.79e-03$ & $1.81e-03$ & $\bs{5.98e-04}$ & $5.98e-04$ \\
$-3$ & $5$ & $100000$ & $1.88e-05$ & $1.88e-05$ & $\bs{5.93e-06}$ & $5.93e-06$ \\ \hline
\end{tabular}
\end{table}

The MSEs of all the estimators seem comparable, with none uniformly better than the rest, although for a majority of these examples $\widetilde p$ has the smallest MSE.  (We use $\min\{\widehat p,1\}$ and $\min\{\widehat p_*,1\}$ instead of $\widehat p$ and $\widehat p_*$, respectively, for computing MSE.)  
An important special case among these examples is $\mu=0$, $\sigma=0.2$, for which $\widehat p$ and $\widetilde p$ have MSEs up to 10 times smaller than the MSEs of $\widehat p_*$ and $\widetilde p_*$, respectively.  For this special case, none of the estimators does a good job approximating $p$, even with $n=10^5$, but $\widehat p_*$ and $\widetilde p_*$ are conservative, whereas $\widehat p$ and $\widetilde p$ are strongly liberal (especially for small p-values).  In many hypothesis testing contexts, the conservative choice is desirable, despite worse MSE.

\subsection{Conditional testing in binary tables} \label{s:binary}

The next example was inspired by \citet{Bezakova:Negative:2006} which describes a situation where an importance sampling algorithm taken from the literature converges extremely slowly.  Let $X$ be a binary $52\times 102$ matrix with row sums $(51,1,1,\dotsc,1)$ and column sums $(1,1,1,\dotsc,1)$.  The null hypothesis is that $X$ came from the uniform distribution over the class of all binary matrices with these row and column sums.  The alternative hypothesis is that $X$ came from a non-uniform distribution that prefers configurations where the 51 ones in the first row tend to clump near the later columns.  For any binary matrix $x$ with these row and column sums, let $\ell_1(x) < \dotsb < \ell_{51}(x)$ denote the indices of the columns for which $x$ has ones in the first row.  For example, if the first row is $(0,1,0,1,0,\dotsc,1)$, then $(\ell_1,\dotsc,\ell_{51})=(2,4,6,\dotsc,102)$.  A suitable test statistic for distinguishing the null from the alternative is
\[ t(x)  \eqd  \sum_{j=1}^{51} \ell_j(x) \]
Suppose that we observe a matrix $X$ with $\bigl(\ell_1(X),\dotsc,\ell_{51}(X)\bigr) = ($1,  7,  8, 10, 11, 15, 16, 17, 20, 21, 28, 29, 30, 36, 37, 40, 41, 42, 48, 49, 51, 54, 55, 56, 57, 58, 60, 61, 62, 63, 65, 67, 68, 69, 70, 73, 75, 77, 80, 81, 82, 85, 86, 87, 91, 92, 94, 95, 96, 97, 100$)$, so that $t(X)=2813$.  What is the p-value $p(X)$?

\citet{Chen:Sequential:2005} suggest a proposal distribution for approximate uniform generation of binary matrices with specified row and column sums.
(They actually suggest a variety of proposal distributions.  We use the same algorithm that they used in their examples: sampling columns successively and using conditional Poisson sampling weights of $r_i/(n-r_i)$.  Note that the basic and more delicate algorithms that they describe are identical for the example here.  Note also that we could have used their algorithm to sample rows successively, in which case the sampling would be exactly uniform because of the symmetry for this particular example.)
    Applying their algorithm to this problem with $n=2\times10^7$ samples from the proposal distribution gives approximate p-values and standard errors of $\widetilde p = 0.068\pm 0.013$ and the correction $\widetilde p_* = 0.068$.  (We also have $\widehat p = 0.066\pm0.014$ and $\widehat p_* = 0.066$, although normally they would not be available because the importance weights are known only up to a constant of proportionality.  The exact weights are available in this particular example because of the special symmetry.)
The estimated squared coefficient of variation (i.e., the sample variance divided by the square of the sample mean) for the importance weights is $\widehat{cv}^2=20249$, which is extremely large, but which suggests an effective sample size of around $n/(1+cv^2)\approx 988$ according to the heuristic in \citet{Kong:Sequential:1994}.  While much smaller than $2\times 10^7$, a sample size of $988$ would usually ensure reasonable convergence using direct sampling and the estimated standard errors seem to agree with this heuristic.  Furthermore, the solid black line in Figure \ref{f:2} shows how $\widetilde p$ changes with $n$.  It seems to have stabilized by the end.  Thus, we might be tempted to take $\widetilde p = 0.068$ as a reasonable approximation of $p(X)$ and report a ``p-value'' of $0.068$.

\begin{figure}[h] 
\begin{center}
	\epsfig{file=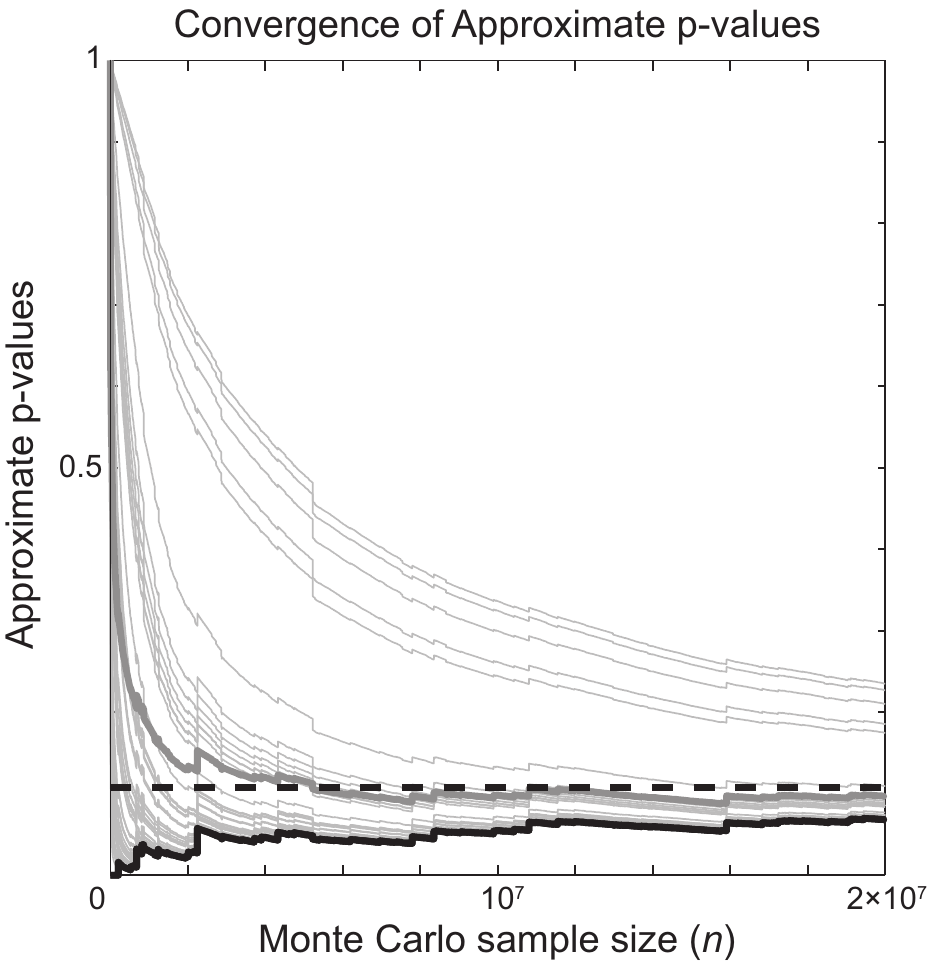,width=.5\textwidth}
\end{center}
\caption{Evolution of the approximated p-values with increasing Monte Carlo sample size ($n$) in Example \ref{s:binary}.  The dashed black line shows (an excellent approximation of) the target p-value.  All of the other lines in the plot will converge to this value as $n\to\infty$, because they are all consistent estimators.  The solid black line shows the uncorrected  importance sampling p-value approximation, $\widetilde p$.  For a fixed value of the observed test statistic, $t(X)$, this line does not depend on the choice of $X$.  The thin gray lines show the corrected  importance sampling p-value approximation, $\widetilde p_*$, for different choices of $X$, but each with the same value of the observed test statistic, $t(X)$.  Unlike $\widetilde p$, $\widetilde p_*$ does depend on the choice of $X$.  The figure shows 50 thin gray lines.  The thick gray line is the average of 1000 such thin gray lines (the first 50 of which are shown).  The $X$'s for these 1000 were chosen uniformly from the set of $X$'s with the same test statistic.  The same sequence of $2\times 10^7$ importance samples were used in all cases. \label{f:2}}    
\end{figure}

It turns out that $0.068$ is not a good approximation of $p(X)$.  The special symmetry in this particular example permits exact and efficient uniform sampling --- each of the $\binom{102}{51}$ choices for the first row is equally likely, and each of the remaining $51!$ choices for the remainder of the matrix is equally likely.  The true p-value is very nearly $p(X)=0.107$, as estimated by $\bar p$ using $n=10^7$ i.i.d.~Monte Carlo samples from $P$.  This is shown as the black dashed line in Figure \ref{f:2}.  

Despite the fact that $\widetilde p$ and $\widetilde p_*$ are essentially identical for this example, and despite the fact that they do not give a good estimate of the target p-value, $p(X)$, the suggestion here is that reporting $\widetilde p_*$ as a p-value correctly preserves the interpretation of a p-value, whereas, reporting $\widetilde p$ as a p-value is misleading.  If we were to repeat this test many times, generating new $X$'s sampled from the null (uniform) distribution and new importance samples, we would find that $\widetilde p$ is much too liberal --- there will be too many small p-values.  The correction in $\widetilde p_*$ does not have this problem.

Generating $2\times 10^7$ samples from $Q$ is computationally demanding, so we do not have a simulation experiment to illustrate the assertion in the previous paragraph.  We do, however, illustrate the idea with two less demanding computations.  The first approximates cdfs of $\widetilde p$ and $\widetilde p_*$ for $n=1000$, much like the experiments in Example \ref{s:G}.  Results are shown in Figure \ref{f:3}.  The uncorrected  p-values are extremely liberal --- rejecting when $\widetilde p \leq 0.05$ results in a type I error-rate of 40\%.  The corrected  p-values are conservative, but valid, correctly preserving the interpretation of a level $\alpha$ test.  

\begin{figure}[h] 
\begin{center}
	\epsfig{file=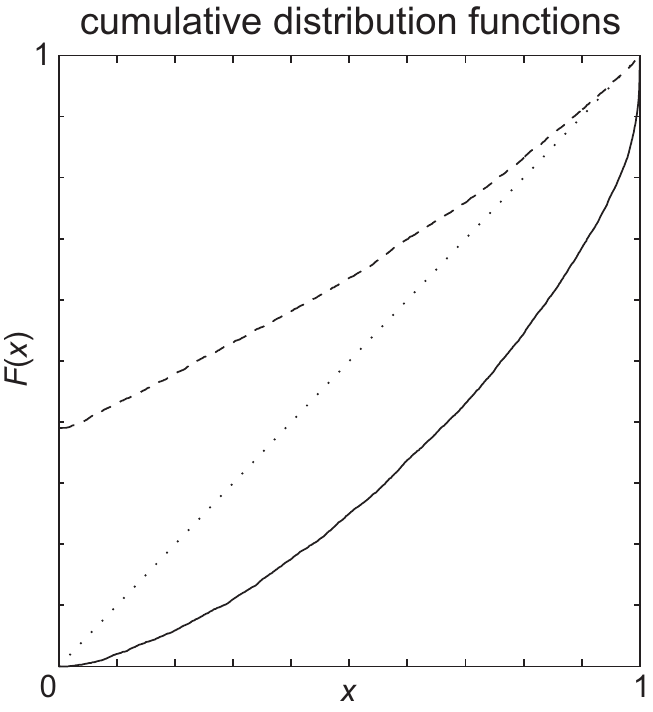,width=.4\textwidth}
\end{center}
\caption{Cumulative distribution functions under the null hypothesis for $\widetilde p$ (dashed line) and $\widetilde p_*$ (solid line) using $n=1000$ in Example \ref{s:binary}.  The cdfs are approximated with $10^4$ Monte Carlo repetitions.  The dotted line is the Uniform$(0,1)$ cdf.  It is visually indistinguishable from the cdf of the target p-value, $p(X)$, although not identical, because of the discreteness of the test statistic. \label{f:3}}    
\end{figure}

Our second illustration uses the same $2\times 10^7$ importance samples as before, but simply changes the original observation $X$ to a new one with the same value of $t$.  For example, consider another observation of $X$ with $\bigl(\ell_1(X),\dotsc,\ell_{51}(X)\bigr) = ($1,  2,  6,  8, 10, 14, 16, 18, 19, 20, 21, 23, 25, 29, 32, 33, 36, 38, 42, 44, 46, 49, 50, 53, 54, 57, 60, 62, 63, 67, 68, 69, 70, 72, 73, 76, 78, 81, 82, 88, 89, 92, 93, 94, 95, 96, 97, 99, 100, 101, 102$)$, which has the same value of the test statistic $t(X)=2813$, and consequently the same target p-value.  It also has the same approximate uncorrected  p-value, $\widetilde p = 0.068$ (using the same importance samples from before).  But it does not have the same corrected  p-value.  In this case $\widetilde p_* = 0.3907$.  The change results from the fact that this new observation of $X$ is much rarer than the previous one under the importance sampling proposal distribution --- so rare, in fact, that even $n=2\times 10^7$ is insufficient to mitigate the effect that its importance weight has on the corrected  p-value.  The 50 different thin gray lines in Figure \ref{f:2} show how $\widetilde p_*$ varies with $n$ using 50 different choices of $X$, all with $t(X)=2813$.  The thick gray line shows the average of 1000 different choices of $X$, all with $t(X)=2813$.  The new $X$'s were chosen uniformly subject to the constraint that $t(X)=2813$, which is actually quite efficient in this special case by using direct sampling from $P$, followed by rejection sampling.  By paying attention to $X$, and not just $t(X)$, $\widetilde p_*$ can be a valid p-value, even though it is not a good estimator of $p(X)$ using $n\approx 10^7$.

This class of importance sampling algorithms is used, for example, in testing goodness of fit in Rasch models and in the statistical analysis of ecological data.  \citet{Chen:Sequential:2005} use an occurrence matrix for ``Darwin's finch data'' to illustrate their importance sampling approach for testing the uniformity of zero-one tables.  The observed table, $X$, is a $13\times 17$ binary matrix indicating which of 13 species of finch inhabit which of 17 islands in the Gal\'apagos.  The (ordered) row sums of $X$ are (17, 14, 14, 13, 12, 11, 10, 10, 10,  6,  2,  2,  1) and the (ordered) column sums are (11, 10, 10, 10, 10,  9,  9,  9,  8,  8,  7,  4,  4,  4,  3,  3,  3).  A scientifically relevant null hypothesis is that $X$ was selected uniformly among all possible occurrence matrices with the same sequence of row and column sums.  A scientifically relevant test statistic is
\[ t(x)  \eqd  \frac{1}{13(12)}\sum_{i\neq j} \bigl((xx^{\text{T}})_{ij}\bigr)^2 \]
where $(A)_{ij}$ denotes entry $(i,j)$ in matrix $A$, and where $A^{\text{T}}$ denotes the transpose of $A$.  The test statistic should be large if there is competition among species.  The observed data has $t(X)=53.1$.  See \citet{Chen:Sequential:2005} for details and references.

\citet{Chen:Sequential:2005} use the importance sampling algorithm described above to generate an approximate p-value for this hypothesis test.  They report p-values of $\widetilde p = (4.0\pm 2.8)\times 10^{-4}$ using $n=10^4$ and $\widetilde p = (3.96\pm 0.36)\times 10^{-4}$ using $n=10^6$.  The error terms are approximate standard errors, estimated from the importance samples.  Repeating their analyses, but with new importance samples, gives $\widetilde p = (7.77\pm7.78)\times 10^{-4}$ and $\widetilde p_* = 13.92\times 10^{-4}$ using $n=10^4$, and $\widetilde p = (4.32\pm0.51)\times 10^{-4}$ and $\widetilde p_* = 4.38\times 10^{-4}$ using $n=10^6$.  An investigator reporting any of the values of $\widetilde p$, even with estimates of standard errors, has few guarantees.  But reporting either of the values of $\widetilde p_*$ is guaranteed to preserve the usual interpretation of a p-value.

\bibliographystyle{apacite}
\bibliography{ISpvalue}

\end{document}